\documentclass[12pt]{article}

\usepackage{amsmath}
\usepackage{amsfonts}
\usepackage{amssymb}
\usepackage{fullpage}
\usepackage{graphicx}
\usepackage{float}
\usepackage{complexity}
\usepackage{pgf}
\usepackage{tikz}
\usetikzlibrary{arrows,automata}
\usepackage[latin1]{inputenc}
\usepackage{verbatim}
\usepackage{caption}
\usepackage{subcaption}
\captionsetup{compatibility=false}

\newtheorem{theorem}{Theorem}
\newtheorem{corollary}[theorem]{Corollary}
\newtheorem{lemma}[theorem]{Lemma}

\newtheorem{proposition}[theorem]{Proposition}
\newtheorem{definition}[theorem]{Definition}
\newtheorem{claim}[theorem]{Claim}

\newcommand{\qed}{\rule{7pt}{7pt}}
\newenvironment{proof}{\noindent{\bf Proof}\hspace*{1em}}{\hfill\qed\bigskip}
\newenvironment{proof-sketch}{\noindent{\bf Sketch of Proof}\hspace*{1em}}{\qed\bigskip}
\newenvironment{proof-idea}{\noindent{\bf Proof Idea}\hspace*{1em}}{\qed\bigskip}
\newenvironment{proof-of-lemma}[1]{\noindent{\bf Proof of Lemma #1}\hspace*{1em}}{\qed\bigskip}
\newenvironment{proof-attempt}{\noindent{\bf Proof Attempt}\hspace*{1em}}{\qed\bigskip}

\newcommand{\hyphen}{\lower-.12em\hbox{\textrm{-}}}
\newcommand{\rev}[1]{\ensuremath{R^{\bullet}(#1)}}
\newcommand{\DT}[1]{\ensuremath{DT(#1)}}
\newcommand{\RM}[1]{\ensuremath{RM(#1)}}
\newcommand{\vrev}[1]{\ensuremath{R^{\phi}(#1)}}
\newcommand{\contr}[1]{\ensuremath{c(#1)}}
\newcommand{\erank}[1]{\ensuremath{\chi^{\prime}_{e}(#1)}}
\newcommand{\revp}{\lang{TREE\hyphen PEBBLE}}
\newcommand{\vrevp}{\lang{TREE\hyphen VISITING\hyphen PEBBLE}}

\title{Pebbling Meets Coloring : Reversible Pebble Game On Trees}
\author{Balagopal Komarath\footnote{Sponsored by TCS Research Fellowship}
\hspace{1cm} Jayalal Sarma \hspace{1cm} Saurabh Sawlani \\[3mm]
{\large Department of Computer Science \& Engineering,} \\ 
{\large Indian Institute of Technology Madras, Chennai - 36, India.} \\[2mm]
{\large Email : \{{\tt baluks|jayalal.sarma|saurabh.sawlani\}@gmail.com}}}

\begin{document}
\maketitle

\begin{abstract}
  The reversible pebble game is a combinatorial game played on rooted
  DAGs. This game was introduced by Bennett \cite{Ben89} motivated by
  applications in designing space efficient reversible
  algorithms. Recently, Siu Man Chan \cite{Chan13} showed that the
  reversible pebble game number of any DAG is the same as its
  Dymond-Tompa pebble number and Raz-Mckenzie pebble number.
  
  We show, as our main result, that for any rooted directed tree $T$,
  its reversible pebble game number is always just one more than the
  edge rank coloring number of the underlying undirected tree $U$ of
  $T$.  The most striking implication of this result is that the
  reversible pebble game number of a tree does not depend upon the
  direction of edges, a fact that does not hold in general for DAGs.
  It is known that given a DAG $G$ as input, determining its
  reversible pebble game number is \PSPACE-hard. Our result implies
  that the reversible pebble game number of trees can be computed in
  polynomial time as edge rank coloring number of trees can be
  computed in linear time (\cite{Lam98optimaledge}).

  We also address the question of finding the number of steps required
  to optimally pebble various families of trees.  It is known that
  trees can be pebbled in $n^{{O(\log(n))}}$ steps where $n$ is the
  number of nodes in the tree. Using the equivalence between
  reversible pebble game and the Dymond-Tompa pebble game
  \cite{Chan13}, we show that complete binary trees can be pebbled in
  $n^{{O(\log\log(n))}}$ steps, a substantial improvement over the
  naive upper bound of $n^{{O(\log(n))}}$.
  
  It remains open whether complete binary trees can be pebbled in
  polynomial number of steps (i.e., $n^{k}$ for some constant
  $k$). Towards this end, we show that {\em almost optimal} (i.e.,
  within a factor of $(1+\epsilon)$ for any constant $\epsilon > 0$)
  pebblings of complete binary trees can be done in polynomial number
  of steps.

  We also show a time-space trade-off for reversible pebbling for
  families of bounded degree trees by a divide-and-conquer approach:
  for any constant $\epsilon > 0$, such families can be pebbled using
  $O(n^{\epsilon})$ pebbles in $O(n)$ steps. This generalizes an
  analogous result of Kr\'alovic\cite{Kra01} for chains.

\end{abstract}

\section{Introduction}

Pebbling games of various forms on graphs abstracts out resources in
different combinatorial models of computation (See
\cite{Chan13-thesis}). A rooted DAG can be used to model computation
as follows -- Each node in the DAG represents a value obtained during
computation, the source nodes represent input values, the internal
nodes represent intermediate values, and the root node represents the
output value. A pebble placed on a vertex in a graph corresponds to
storing the value at that node, and an edge $(a,b)$ in the graph would
represent a data-dependency - namely, the value at $b$ can be computed
only if the value at $a$ is known (or stored). Devising the rules of
the pebble game to capture the rules of the computation, and
establishing bounds for the total number of pebbles used at any point
in time, gives rise to a combinatorial approach to proving bounds on
the \emph{space} used by the computation. The Dymond-Tompa and
Raz-Mckenzie pebble games depict some of the combinatorial barriers in
improving upper bounds for depth (or parallel time) of Boolean
circuits (or parallel algorithms).

Motivated by applications in the context of reversible computation
(for example, quantum computation), Bennett\cite{Ben89} introduced the
reversible pebble game. Given any DAG $G$ with a unique sink node $r$,
the reversible pebble game starts with no pebbles on $G$ and ends with
a pebble (only) on $r$. Pebbles can be placed or removed from any node
according to the following two rules.
\begin{enumerate}
\item To pebble $v$, all in-neighbors of $v$ must be pebbled.
\item To unpebble $v$, all in-neighbors of $v$ must be pebbled.
\end{enumerate}
The goal of the game is to pebble the sink node $r$ using the minimum
number of pebbles (also using the minimum number of steps).

Recently, Chan\cite{Chan13} showed that for any DAG $G$ the number of
pebbles required for the reversible pebble game is exactly the same as
the number of pebbles required for the Dymond-Tompa pebble game and
the Raz-Mckenzie pebble game. However, connections between the
reversible pebble game and graph parameters not arising from
computational considerations were not known. For irreversible pebble
games, we know that the black white pebbling number of trees is
closely related to min-cut linear arrangements of trees\cite{Yan85}.

On the computational complexity front, Chan\cite{Chan13} also studied
the complexity of the following problem -- Given a DAG $G = (V, E)$
with a unique sink $r$ and an integer $1 \leq k \leq |V|$, check if
$G$ can be pebbled using at most $k$ pebbles. He showed that this
problem is \PSPACE-complete. Determining the irreversible black and
black-white pebbling number are known to be \PSPACE-complete on DAGs
(See \cite{Gilbert}, \cite{Hertel}). If we restrict the irreversible
black pebble game to be read-once (each node is pebbled only once),
then the problem becomes \NP-complete (See \cite{Sethi75}). However,
if we restrict our attention to trees, the irreversible black pebble
game\cite{loui} and black-white pebble game\cite{Yan85} are solvable
in polynomial time. The key insight is that the optimal
\emph{irreversible} (black or black-white) pebbling number of trees
can be achieved by read-once pebblings. Deciding whether the pebbling
number is at most $k$ for a given tree is in \NP\ since the optimal
pebbling serves as the certificate. We cannot show that determining
the reversible pebbling number is in \NP\ using the same argument as
we do not know whether the optimal value can always be achieved using
pebblings taking only polynomially many steps.

\paragraph{Our Results:} In this paper, we study the reversible pebble
game on trees. For an undirected tree $T$, the edge rank coloring
number of the tree is the minimum number of colors required to color
the edges of $T$ using integers such that for any two edges in $T$
having the same color $i$, there is at least one edge on the path
between those edges that has a higher color. We show that the
reversible pebbling number of any tree is exactly one more than the
edge rank coloring number of the underlying undirected tree. Besides,
the reversible pebbling number, another interesting parameter related
to reversible pebble game is the number of steps required to optimally
pebble the given DAG. For example, it is known that paths can be
optimally pebbled in $O(n\log n)$ steps. We show that the connection
with Dymond-Tompa pebble game can be exploited to show that complete
binary trees have optimal pebblings that take at most
$n^{{O(\log\log(n))}}$ steps. This is a significant improvement over
the previous upper bound of $n^{{O(\log(n))}}$ steps. It remains open
whether complete binary trees can be pebbled in polynomial number of
steps. Towards this end, we show that ``almost'' (within a factor of
$(1+\epsilon)$ for any constant $\epsilon > 0$) optimal pebblings of
complete binary trees can be done in polynomial number of steps. We
also generalize a time-space trade-off result given for paths by
Kr\'alovic to families of bounded degree trees showing that for any
constant $\epsilon > 0$, such families can be pebbled using
$O(n^{\epsilon})$ pebbles in $O(n)$ steps.

\paragraph{Complexity of Reversible Pebbling Number on
  Trees:} We show that the reversible pebbling number of trees along
with strategies achieving the optimal value can be computed in
polynomial time. This is obtained by combining our main result with
the linear-time algorithm given by Lam and Yue \cite{Lam98optimaledge}
for finding an optimal edge rank coloring of the underlying undirected
tree. Our proof of the main result also shows how to convert an
optimal edge rank coloring into an optimal reversible pebbling.

\section{Preliminaries}

We assume familiarity with basic definitions in graph theory, such as
those found in \cite{West}. A directed tree $T = (V, E)$ is called a
\emph{rooted directed tree} if there is an $r \in V$ such that $r$ is
reachable from every node in $T$. The node $r$ is called the root of
the tree.

An \emph{edge rank coloring} of an undirected tree $T$ with $k$
colors $\{ 1, \ldots ,k\}$ labels each edge of $T$ with a color such
that if two edges have the same color $i$, then the path between
these two edges consists of an edge with some color $j > i$. The
minimum number of colors required for an edge rank coloring of $T$
is denoted by \erank{T}.

\begin{definition}{(Reversible Pebbling\cite{Ben89})}
  Let $G$ be a rooted DAG with root $r$. A \emph{reversible pebbling
    configuration} of $G$ is a set $P \subseteq V$ (the set of pebbled
  vertices). A \emph{reversible pebbling} of $G$ is a sequence of
  reversible pebbling configurations $P = (P_{1}, \ldots ,P_{m})$ such
  that $P_{1} = \phi$ and $P_{m} = \{ r \}$ and for every
  $i, 2 \leq i \leq m$, we have

  \begin{enumerate}
  \item $P_{i} = P_{i-1} \cup \{ v \}$ or
    $P_{i-1} = P_{i} \cup \{ v \}$ and $P_{i} \neq P_{i-1}$ (Exactly
    one vertex is pebbled/unpebbled at each step).

  \item All in-neighbors of $v$ are in $P_{i-1}$.
  \end{enumerate}
  
  The number $m$ is called the time taken by the pebbling $P$.  The
  number of pebbles or space used in a reversible pebbling of $G$ is
  the maximum number of pebbles on $G$ at any time during the
  pebbling. The \emph{persistent reversible pebbling number} of $G$,
  denoted by \rev{G}, is the minimum number of pebbles required to
  persistently pebble $G$.

  A closely related notion is that of \emph{visiting} reversible pebbling, where the pebbling $P$ satisfies (1) $P_{1} = P_{m} = \phi$ and (2) there exists a $j$ such that $r \in P_{j}$. The minimum number
  of pebbles required for a visiting pebbling of $G$ is denoted by
  \vrev{T}.
\end{definition}

It is easy to see that $\vrev{G} \leq \rev{G} \leq \vrev{G} + 1$ for
any DAG $G$. 

\begin{definition}{(Dymond-Tompa Pebble Game \cite{DT85})}
  Let $G$ be a DAG with root $r$. A Dymond-Tompa pebble game is a
  two-player game on $G$ where the two players, the pebbler and the
  challenger takes turns. In the first round, the pebbler pebbles the
  root node and the challenger challenges the root node. In each
  subsequent round, the pebbler pebbles a (unpebbled) node in $G$ and
  the challenger either challenges the node just pebbled or
  re-challenges the node challenged in the previous round. The pebbler
  wins when the challenger challenges a node $v$ and all in-neighbors
  of $v$ are pebbled.

  The Dymond-Tompa pebble number of $G$, denoted \DT{G}, is the
  minimum number of pebbles required by the pebbler to win against
  an optimal challenger play.
\end{definition}

The Raz-Mckenzie pebble game is also a two-player pebble game played
on DAGs. The optimal value is denoted by \RM{G}. A definition for the
Raz-Mckenzie pebble game can be found in \cite{RM99}. Although the
Dymond-Tompa game and the reversible pebble game look quite
different. The following theorem reveals a surprising connection
between them.

\begin{theorem}{(Theorems 6 and 7, \cite{Chan13})}
  For any rooted DAG $G$, we have $\DT{G} = \rev{G} = \RM{G}$.
  \label{lem:chan:dt=rp}
\end{theorem}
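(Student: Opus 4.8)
The plan is to prove the three quantities equal by establishing a cycle of inequalities through direct translations between winning strategies in the two pebble games, treating $\rev{G} = \DT{G}$ first and then folding in $\RM{G}$. Concretely, I would show $\rev{G} \le \DT{G}$ and $\DT{G} \le \rev{G}$ separately, and then close the loop by proving $\rev{G} \le \RM{G} \le \DT{G}$ using the definition of the Raz--McKenzie game from \cite{RM99}. The conceptual content is that a solitaire game (reversible pebbling) and an adversarial two-player game (Dymond--Tompa) compute the same quantity; the bridge is the reversibility of the pebbling rules, which gives the pebbling sequence a symmetric, ``playable'' structure.

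For $\rev{G} \le \DT{G}$, I would start from an optimal pebbler strategy in the Dymond--Tompa game, which is naturally a binary tree: each internal node records the vertex the pebbler places, and its two children correspond to the challenger either re-challenging the previously challenged vertex or challenging the newly pebbled one. I would unfold this strategy tree by a depth-first traversal into a reversible pebbling sequence, exploiting the fact that the pebbling and unpebbling rules are identical, so that after following one branch and recording the configuration it produces, the moves can be run backward to restore pebbles and free them for the sibling branch. The vertices held simultaneously correspond to those on a single root-to-node path of the strategy tree, so the space never exceeds the $\DT{G}$ pebbles used along any play.

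The main obstacle is the reverse direction $\DT{G} \le \rev{G}$, where I must manufacture a reactive winning strategy for the pebbler against an adaptive challenger out of a fixed solitaire sequence $P_1 = \phi, \ldots, P_m = \{ r \}$. The natural tool is a meet-in-the-middle (bisection) argument on the time axis: when a vertex $u$ is challenged, the pebbler maintains a time interval $[a,b]$ across which the status of $u$ changes, pebbles the vertices of the midpoint configuration that are relevant to $u$, and lets the challenger's binary response select the subinterval on which to recurse. Making the pebble-count bookkeeping match $\rev{G}$ exactly---ensuring that the vertices the pebbler must keep on the board at any moment never outnumber those simultaneously present in the original pebbling---is the delicate part, and is where an inductive invariant tying the challenged vertex and the current interval to a cut in the pebbling must be set up carefully.

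Finally, for the Raz--McKenzie value I would avoid re-deriving everything from scratch: having $\rev{G} = \DT{G}$, I would establish $\rev{G} \le \RM{G}$ and $\RM{G} \le \DT{G}$ by the same style of strategy translation, using that the Raz--McKenzie game is again a two-player pebble game whose pebbler strategies can be converted into reversible pebblings and whose challenger strategies can be mimicked inside a Dymond--Tompa play. I expect these translations to be more routine than the $\DT{G} \le \rev{G}$ step, since they relate two adversarial games of similar form rather than crossing the solitaire/two-player divide.
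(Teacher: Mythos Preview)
This theorem is not proved in the paper at all: it is stated in the Preliminaries section with the attribution ``(Theorems 6 and 7, \cite{Chan13})'' and is used as a black box throughout (e.g., in Corollary~\ref{cor:vis=pers} and in Theorem~\ref{thm:bt-time-upperbound}). There is no proof in this paper for you to compare against; the authors simply import Chan's result.

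Your proposal is a reasonable high-level reconstruction of how Chan's argument goes, and you have correctly located the nontrivial step as $\DT{G} \le \rev{G}$ via a bisection on the time axis of a reversible pebbling. But for the purposes of this paper, no proof is expected or supplied; the statement is a cited preliminary. If your goal were to actually carry out the proof, you would need to consult \cite{Chan13} directly to pin down the invariant in the bisection argument, since your sketch leaves that bookkeeping (which you yourself flag as delicate) unspecified.
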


\begin{definition}{(Effective Predecessor \cite{Chan13})}
  Given a pebbling configuration $P$ of a DAG $G$ with root $r$, a
  node $v$ in $G$ is called an \emph{effective predecessor} of $r$ iff
  there exists a path from $v$ to $r$ with no pebbles on the vertices
  in the path (except at $r$).
\end{definition}

\begin{lemma}{(Claim 3.11, \cite{Chan13})}
  Let $G$ be any rooted DAG. There exists an optimal pebbler strategy
  for the Dymond-Tompa pebble game on $G$ such that the pebbler always
  pebbles an effective predecessor of the currently challenged node.
  \label{lem:chan:ep}
\end{lemma}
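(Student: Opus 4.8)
The plan is to prove this by an exchange argument on an optimal pebbler strategy, using the fact that in the Dymond-Tompa game pebbles are only ever added, so the number of pebbles the pebbler uses equals the total number of nodes it pebbles before trapping the challenger; hence eliminating a single ``wasteful'' pebbling move saves exactly one pebble. I would carry out the elimination by induction on a measure of the remaining game, such as the number of still-unpebbled vertices, so that the inductive hypothesis lets me assume the desired property already holds on every proper subgame.

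The engine of the argument is the following invariance, which I would prove first. Fix a configuration $P$ with challenged node $c$ and let $\mathrm{EP}(P,c)$ denote the set of effective predecessors of $c$. If $w \notin P$ is \emph{not} an effective predecessor of $c$, then $\mathrm{EP}(P\cup\{w\},c)=\mathrm{EP}(P,c)$. The inclusion $\subseteq$ is immediate, since adding a pebble can only destroy pebble-free paths. For $\supseteq$, if some $u\in\mathrm{EP}(P,c)$ lost its status, the unique new pebble at $w$ must lie on a pebble-free $u$-to-$c$ path, so the portion of that path from $w$ to $c$ is itself pebble-free except at $c$, making $w$ an effective predecessor of $c$ --- a contradiction. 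Thus pebbling a non-effective-predecessor of $c$ makes no progress whatsoever toward trapping the challenger at $c$.

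Now I would take an optimal strategy and locate the first move (nearest the root of the strategy tree) at which the pebbler pebbles a node $w$ that is not an effective predecessor of the current challenge $c$, and branch on the challenger's reply. If the challenger re-challenges $c$, the invariance lemma says the resulting position is identical to the one just before the move apart from the irrelevant pebble sitting on $w$; so I can splice this move out and continue with whatever the strategy played next along that branch, reducing the pebble count and contradicting optimality (or, in the inductive framing, producing the desired strategy with no increase in pebbles). In terms of the value recurrence $V(P,c)=\min_{w\notin P}\bigl(1+\max(V(P\cup\{w\},c),\,V(P\cup\{w\},w))\bigr)$ together with the monotonicity $V(P',c)\le V(P,c)$ for $P\subseteq P'$, this case corresponds to the re-challenge branch being the binding one, and it shows such a $w$ can never be an optimal pebbler move.

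The case I expect to be the main obstacle is when the challenger instead challenges the freshly pebbled node $w$ itself, so that the challenge jumps into the region $\mathrm{EP}(P\cup\{w\},w)$, which the existing pebbles have cut off from $c$. Here I would argue that this branch cannot be what makes $w$ optimal: if challenging $w$ were strictly better for the challenger than re-challenging $c$, then the pebbler, by letting the challenger escape into a region disjoint from the unresolved effective predecessors of $c$, has merely created extra work for itself, and I must exhibit an effective-predecessor move $w'$ of $c$ whose value is no larger. Making this domination precise --- showing that no move letting the challenger jump to a non-effective-predecessor can beat a suitable move inside $\mathrm{EP}(P,c)$, and checking that the replacement meshes with the induction without increasing the pebble count --- is the real technical work, and everything else reduces to the clean invariance and splicing steps above.
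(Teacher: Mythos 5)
First, a point of reference: the paper does not prove this lemma at all --- it is imported verbatim from Chan (Claim 3.11 of \cite{Chan13}) --- so your proposal can only be judged on its own terms, and on those terms it has a genuine gap. Your invariance lemma ($\mathrm{EP}(P\cup\{w\},c)=\mathrm{EP}(P,c)$ when $w$ is not an effective predecessor of $c$) is correct, and the induction on unpebbled vertices plus monotonicity is the right scaffolding. But the splicing step rests on a non sequitur: equality of effective-predecessor \emph{sets} does not imply equality of game \emph{values}, so the pebble on $w$ is not ``irrelevant'' in the sense you need. Concretely, $w$ may be an in-neighbor of a node $v$ that the original strategy pebbles later and the challenger then challenges; the original continuation wins at $v$ on the spot, while your spliced continuation has not won and must pebble $w$ at that moment, handing the challenger a brand-new option of jumping to $w$. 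So ``continue with whatever the strategy played next'' is not a well-defined saving of one pebble. What is actually needed is the quantitative statement $V(P\cup\{w\},c)\ge V(P,c)$, and proving it requires a simulation threaded through the entire remaining game, maintaining the invariant that $w$ stays a non-effective-predecessor of whatever node is currently challenged. That maintenance needs, besides your invariance lemma, a nesting lemma you never state: if $u$ is an effective predecessor of the challenged node $c$ and the challenger jumps to $u$ after it is pebbled, then $\mathrm{EP}(P\cup\{u\},u)\subseteq\mathrm{EP}(P,c)$. The simulation also only goes through if the strategy being simulated is itself already effective-predecessor-only, which is why the clean proof is a \emph{simultaneous} induction: (A) positions with fewer unpebbled vertices have EP-only optimal strategies, and (B) useless pebbles do not change the value; (A) at the smaller parameter feeds the simulation proving (B), and (B) proves (A) at the current parameter.

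Second, the case you single out as ``the real technical work'' --- the challenger jumping to the freshly pebbled $w$ --- is in fact vacuous, and planning to resolve it by exhibiting a dominating effective-predecessor move $w'$ misreads the min-max structure. The value of the move $w$ is $1+\max\bigl(V(P\cup\{w\},c),\,V(P\cup\{w\},w)\bigr)$, and the challenger picks the branch that is \emph{worse} for the pebbler; hence to show the move $w$ is suboptimal it suffices to show that a \emph{single} branch is expensive. Once $V(P\cup\{w\},c)\ge V(P,c)$ is in hand, the re-challenge branch alone forces the move's value to be at least $1+V(P,c)>V(P,c)$, and you are done no matter what $V(P\cup\{w\},w)$ is; no comparison with any alternative move is needed. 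So your proposal is inverted relative to a correct proof: the case you defer as hard requires no work, while the case you dispatch as ``clean invariance and splicing'' is precisely where the whole content of the lemma lives, and as written that step does not go through.
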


The height or depth of a tree is defined as the maximum number of
nodes in any root to leaf path. We denote by $Ch_{n}$ the rooted
directed path on $n$ nodes with a leaf as the root. We denote by
$Bt_{h}$ the complete binary tree of height $h$. We use
$root(Bt_{h})$ to refer to the root of $Bt_{h}$. If $v$ is any node in
$Bt_{h}$, we use $left(v)$ ($right(v)$) to refer to the left (right)
child of $v$. We use $right^{i}$ and $left^i$ to refer to iterated
application of these functions. We use the notation $Ch_{i} + Bt_{h}$
to refer to a tree that is a chain of $i$ nodes where the source node
is the root of a $Bt_{h}$.

\begin{definition}
  We define the language \revp\ as the set of all tuples $(T, k)$,
  where $T$ is a rooted directed tree and k is a integer satisfying
  $1 \leq k \leq n$, such that $\rev{T} \leq k$. The language \vrevp\
  is the same as \revp\ except that the goal is to check whether
  $\vrev{T} \leq k$.
\end{definition}

In the rest of the paper, we use the term pebbling to refer to
\textit{persistent reversible pebbling} unless explicitly stated
otherwise.

\section{Pebbling meets Coloring}
In this section, we prove our main theorem which states that the
reversible pebbling number of any tree is exactly one more than the
edge rank coloring number of its underlying undirected tree. It is
helpful to think about how to solve $\revp$ in polynomial time or even
$\NP$. The first attempt would be to try and use the pebbling sequence
as a certificate that the input tree has low pebbling number. But,
this approach fails because trees are not guaranteed to have optimal
pebbling sequences of polynomial number of steps. We propose the
\emph{strategy tree} (Definition~\ref{def:strat-tree}) as a succinct
encoding of pebbling sequences. A strategy tree describes a pebbling
sequence. The key property is that for any tree, there is an optimal
pebbling sequence that can be described using a strategy tree
(Lemma~\ref{lem:rev=strat}).

\begin{definition}{(Strategy Tree)}
  Let $T$ be a rooted directed tree. If $T$ only has a single node $v$,
  then any strategy tree for $T$ only has a single node labeled
  $v$. Otherwise, we define a strategy tree for $T$ as any tree
  satisfying
  
  \begin{enumerate}
  \item The root node is labelled with some edge $e = (u, v)$ in $T$.
  \item The left subtree of root is a strategy tree for $T_{u}$ and the
    right subtree is a strategy tree for $T\setminus T_{u}$.
  \end{enumerate}
  \label{def:strat-tree}
\end{definition}

The following properties are satisfied by any strategy tree $S$ of $T
= (V, E)$.

\begin{enumerate}
\item Each node has 0 or 2 children.
\item \label{stratprop:bij} There are bijections from $E$ to internal nodes of $S$ and from
  $V$ to leaves of $S$.
\item \label{stratprop:subtree} Let $v$ be any node in $S$. Then the
  subtree $S_{v}$ corresponds to the subtree of $T$ spanned by the
  nodes labeling the leaves of $S_{v}$. If $u$ and $v$ are two nodes
  in $S$ such that one is not an ancestor of the other, then the
  subtrees in $T$ corresponding to $u$ and $v$ are vertex-disjoint.
\end{enumerate}

\begin{lemma}
  Let $T$ be a rooted directed tree. Then $\rev{T} \leq k$ if and only if there
  exists a strategy tree for $T$ of depth at most $k$.
\label{lem:rev=strat}
\end{lemma}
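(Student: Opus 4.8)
The plan is to route everything through the Dymond--Tompa game: by Theorem~\ref{lem:chan:dt=rp} we have $\rev{T} = \DT{T}$, so it suffices to prove that $\DT{T} \le k$ if and only if $T$ admits a strategy tree of depth at most $k$. I would establish this by showing that the minimum strategy-tree depth $d(T)$ and the quantity $\DT{T}$ satisfy one and the same recurrence, and then induct on $|V(T)|$. The recurrence for $d(T)$ is immediate from Definition~\ref{def:strat-tree}: a single node has $d = 1$, and for larger $T$, choosing the root label to be an edge $e = (u,v)$ and taking depth-optimal strategy trees for the two sides gives
\[ d(T) \;=\; \min_{(u,v) \in E}\Bigl(1 + \max\bigl(d(T_u),\, d(T \setminus T_u)\bigr)\Bigr), \]
since the depth of a tree whose root has two children is one more than the larger child-depth.

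The substance is to prove the identical recurrence for $\DT{T}$, namely $\DT{T} = \min_{(u,v)\in E}\bigl(1 + \max(\DT{T_u}, \DT{T\setminus T_u})\bigr)$ with $\DT{\text{single node}} = 1$. The base case is clear: the root has no in-neighbours, so the pebbler wins with one pebble. For the inductive step, recall that after the forced first round the root $r$ is pebbled and challenged, and the pebbler's next move is necessarily some non-root node $u$; this $u$ determines the edge $(u,\mathrm{parent}(u)) = (u,v)$ and the partition of $T$ into $T_u$ and $T \setminus T_u$. I would analyse the challenger's two legal replies. If the challenger challenges $u$, then since $r \notin T_u$ the remaining game is exactly a fresh Dymond--Tompa game on $T_u$ (with $u$ as its pebbled, challenged root), played alongside one idle pebble sitting on $r$; because the in-neighbours of any node of $T_u$ lie inside $T_u$, that pebble can neither help the pebbler nor be reused, and the cost of this branch is $1 + \DT{T_u}$. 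If instead the challenger re-challenges $r$, the key point is that the pebble on $u$ blocks every path out of $T_u$, so by the Effective Predecessor formulation (Lemma~\ref{lem:chan:ep}) no node of $T_u$ is ever an effective predecessor of the challenged node; the game is confined to $T \setminus T_u$ at cost $1 + \DT{T\setminus T_u}$, the extra pebble now being the one on $u$.

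For the upper bound on $\DT{T}$ the pebbler picks the edge minimising $1 + \max(\DT{T_u}, \DT{T\setminus T_u})$ and then plays the inductively optimal strategy in whichever side the challenger enters; since pebbles are never removed in the Dymond--Tompa game, the peak pebble count equals the maximum over the two branches. For the matching lower bound, against \emph{any} pebbler the first real move is some non-root $u$, and the challenger selects the reply attaining the larger of $\DT{T_u}$ and $\DT{T\setminus T_u}$; an optimal challenger play for that side forces that many pebbles within the side, while the single pebble on the opposite part persists and is counted separately, for a total of $1 + \max(\cdots) \ge \min_{(u,v)}(1 + \max(\cdots))$. I expect the main obstacle to be precisely this decomposition step: arguing rigorously that the two subgames are independent fresh Dymond--Tompa games and that exactly one extra pebble is forced to persist in each branch, neither helping the pebbler nor being reusable, which is where Lemma~\ref{lem:chan:ep} and careful pebble accounting are essential. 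With both recurrences and base cases matched, induction on $|V(T)|$ yields $d(T) = \DT{T} = \rev{T}$, and hence $\rev{T} \le k$ iff a strategy tree of depth at most $k$ exists.
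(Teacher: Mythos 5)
Your proposal is correct and takes essentially the same route as the paper's own proof: both pass to the Dymond--Tompa game via Theorem~\ref{lem:chan:dt=rp}, decompose $T$ along the pebbler's first move into $T_{u}$ and $T \setminus T_{u}$, and use the effective-predecessor property (Lemma~\ref{lem:chan:ep}) to confine play to each part before inducting on the size of the tree. Packaging the two directions as one shared recurrence for the minimum strategy-tree depth and $\DT{T}$ is only a cosmetic repackaging of the paper's two-direction induction, though your accounting for why the two subgames are independent and why exactly one idle pebble persists in each branch is spelled out more explicitly than in the paper.
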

\begin{proof}
  We prove both directions by induction on $|T|$.  If $T$ is a single
  node tree, then the statement is trivial.

  (if) Assume that the root of a strategy tree for $T$ of depth $k$ is
  labelled by an edge $(u, v)$ in $T$. The pebbler then pebbles the
  node $u$. If the challenger challenges $u$, the pebbler follows the
  strategy for $T_{u}$ given by the left subtree of root. If the
  challenger re-challenges, the pebbler follows the strategy for
  $T\setminus T_{u}$ given by the right subtree of the root. The
  remaining game takes at most $k-1$ pebbles by the inductive
  hypothesis. Therefore, the total number of pebbles used is at most
  $k$.

  (only if) Consider an upstream pebbler that uses at most $k$
  pebbles. We are going to construct a strategy tree of depth at most
  $k$. Assume that the pebbler pebbles $u$ in the first move where
  $e = (u, v)$ is an edge in $T$. Then the root node of $S$ is
  labelled $e$. Now we have
  $\rev{T_{u}}, \rev{T \setminus T_{u}} \leq k-1$. Let the left
  (right) subtree be the strategy tree obtained inductively for
  $T_{u}$ ($T \setminus T_{u}$). Since the pebbler is upstream, the
  pebbler never places a pebble outside $T_{u}$ ($T \setminus T_{u}$)
  once the challenger has challenged $u$ (the root).
\end{proof}

We now introduce a new game called the matching game played on
undirected trees (Definition~\ref{def:contr}). This game acts as a
link between the reversible pebble game and edge rank coloring.

\begin{definition}{(Matching Game)}
  Let $U$ be an undirected tree. Let $T_{1} = U$. At each step of the
  matching game, we pick a matching $M_{i}$ from $T_{i}$ and contract
  all the edges in $M_{i}$ to obtain the tree $T_{i+1}$. The game ends
  when $T_{i}$ is a single node tree. We define the \emph{contraction
    number} of $U$, denoted \contr{U}, as the minimum number of
  matchings in the matching sequence required to contract $U$ to the
  single node tree.
  \label{def:contr}
\end{definition}

\begin{lemma}
  Let $T$ be a rooted directed tree and let $U$ be the underlying
  undirected tree for $T$. Then $\rev{T} = k+1$ if and only if $\contr{U} = k$.
  \label{lem:strat=contr}
\end{lemma}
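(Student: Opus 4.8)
The plan is to prove both directions at once by extracting a common recursive characterization of the two quantities and then inducting on $|T|$. By Lemma~\ref{lem:rev=strat}, $\rev{T}$ equals the minimum depth of a strategy tree for $T$, and unfolding the recursive definition of a strategy tree gives, for $|T| > 1$,
\[
  \rev{T} = 1 + \min_{e = (u,v) \in T} \max\bigl(\rev{T_u},\, \rev{T \setminus T_u}\bigr),
\]
with $\rev{T} = 1$ when $T$ is a single node. The heart of the argument is to establish the parallel recursion for the contraction number: writing $U_1, U_2$ for the two components of $U - e$,
\[
  \contr{U} = 1 + \min_{e \in U} \max\bigl(\contr{U_1},\, \contr{U_2}\bigr),
\]
with $\contr{U} = 0$ when $U$ is a single node. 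Since each directed edge $e = (u,v)$ of $T$ corresponds to exactly one undirected edge of $U$, and the directed split $\{T_u,\, T \setminus T_u\}$ is precisely the undirected split $\{U_1, U_2\}$ of $U - e$ (each piece inheriting a rooted orientation), the two recursions have identical combinatorial shape. Induction on $|T|$, with base cases $\rev{(\text{node})} = 1$ and $\contr{(\text{node})} = 0$, then yields $\rev{T} = \contr{U} + 1$ immediately.

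It remains to prove the contraction recursion. For the upper bound ($\le$), I would fix the minimizing edge $e$ and run optimal matching sequences for $U_1$ and $U_2$ in parallel: because $U_1$ and $U_2$ are vertex-disjoint, the union of their $i$-th matchings is again a matching of $U$, so after $\max(\contr{U_1}, \contr{U_2})$ rounds both components have been contracted to single super-vertices and $U$ has become the single edge $e$; one final matching $\{e\}$ finishes the game, for a total of $1 + \max(\contr{U_1}, \contr{U_2})$ rounds. For the lower bound ($\ge$), I would take an optimal matching sequence $M_1, \dots, M_k$ with $k = \contr{U}$ and observe that the last matching must contract a two-vertex tree, hence the tree just before the final step is a single edge $e^{\ast}$; this $e^{\ast}$ is never contracted before step $k$, so its removal keeps the two sides separate throughout, and the restrictions $M_i \cap E(U_1)$ and $M_i \cap E(U_2)$ (for $1 \le i \le k-1$) are valid matching sequences contracting $U_1$ and $U_2$ respectively. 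Hence $\contr{U_1}, \contr{U_2} \le k - 1$, giving $\min_e \max(\contr{U_1}, \contr{U_2}) \le \contr{U} - 1$.

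The main obstacle is the lower bound for the contraction recursion: one must argue carefully that the final matching acts on a single edge (which follows from the fact that a matching contraction can reduce a tree to one vertex only when it already has two vertices), and that restricting each matching $M_i$ to the edge set of a component $U_j$ genuinely yields a contraction sequence for $U_j$ --- this uses that, since $e^{\ast}$ survives until the last step, no contraction ever merges a vertex of $U_1$ with a vertex of $U_2$, so the two sides evolve independently. A secondary point requiring care is the bookkeeping when the two parallel sequences have different lengths (pad the shorter one with empty matchings) and the verification that the union of two matchings on vertex-disjoint subtrees is a matching of $U$; both are routine once vertex-disjointness is in hand.
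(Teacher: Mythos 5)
Your proof is correct, but it is organized differently from the paper's. The paper argues by directly transforming objects in both directions: from a strategy tree of depth $k+1$ it extracts a matching sequence of length $k$ by stratifying the internal nodes of the strategy tree into levels (each internal node lying one level above the maximum level of its children) and letting $M_i$ be the edges labelling level-$i$ nodes, verifying validity by a round-by-round induction; conversely, from a matching sequence it builds a strategy tree by labelling the root with the edge in the last matching and recursing on the two sides. Your proof instead matches two min-max recursions: the one for $\rev{T}$ supplied by Lemma~\ref{lem:rev=strat}, and the one you establish for the contraction number, namely $\contr{U} = 1 + \min_{e}\max\bigl(\contr{U_1},\contr{U_2}\bigr)$, followed by an induction on $|T|$. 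The two halves of your recursion correspond exactly to the paper's two directions: your parallel-composition upper bound is the recursive form of the paper's level decomposition (unwinding your recursion reproduces the paper's levels), and your lower bound --- the last matching must contract a two-vertex tree, so a single edge $e^{\ast}$ survives to the end, and restricting each $M_i$ to the two sides of $e^{\ast}$ stays valid because no contraction ever merges across $e^{\ast}$ --- is essentially the paper's converse construction. What your organization buys is modularity and the fact that orientation-independence of $\rev{T}$ is transparent, since the recursion for $\contr{U}$ never mentions edge directions. What the paper's organization buys is an explicit global translation between optimal strategy trees and optimal matching sequences, which is what the authors invoke to convert an optimal edge rank coloring into an optimal pebbling in polynomial time (Corollary~\ref{cor:vis=pers}); your version yields the same translation, but only implicitly by unwinding the induction. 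The two delicate points you flag (the final matching acts on a single edge; restrictions of matchings to components remain valid matching sequences) are precisely the right ones, and your handling of both is sound.
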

\begin{proof}
  First, we describe how to construct a matching sequence of length
  $k$ from a strategy tree $S$ of depth $k + 1$. Let the leaves of $S$
  be the level 0 nodes. For $i \geq 1$, we define the level $i$ nodes
  to be the set of all nodes $v$ in $S$ such that one child of $v$ has
  level $i-1$ and the other child of $v$ has level at most
  $i-1$. Define $M_{i}$ to be the set of all edges in $U$
  corresponding to level $i$ nodes in $S$. We claim that $M_{1},
  \ldots ,M_{k}$ is a matching sequence for $U$. Define $S_{i}$ as the
  set of all nodes $v$ in $S$ such that the parent of $v$ has level at
  least $i+1$. Let $Q(i)$ be the statement ``$T_{i+1}$ is obtained from
  $T_{1}$ by contracting all subtrees corresponding to nodes (See
  Property~\ref{stratprop:subtree}) in $S_{i}$''. Let $P(i)$ be the
  statement ``$M_{i+1}$ is a matching in $T_{i+1}$''. We will prove
  $Q(0)$ and $Q(i) \implies P(i)$ and $(Q(i) \wedge P(i)) \implies
  Q(i+1)$. Indeed for $i = 0$, we have $Q(0)$ because $T_{1} = U$ and
  $S_{0}$ is the set of all leaves in $S$ or nodes in $T$
  (Property~\ref{stratprop:bij}). To prove $Q(i) \implies P(i)$,
  observe that the edges of $M_{i+1}$ correspond to nodes in $S$ where
  both children are in $S_{i}$. So these edges correspond to edges in
  $T_{i+1}$ (by $Q(i)$) and the fact that these edges are pairwise disjoint since no
  two nodes in $S$ have a common child).

  To prove that $(Q(i) \wedge P(i)) \implies Q(i+1)$, consider the
  tree $T_{i+2}$ obtained by contracting $M_{i+1}$ from
  $T_{i+1}$. Since $Q(i)$ is true, this is equivalent to contracting
  all subtrees corresponding to $S_{i}$ and then contracting the edges
  in $M_{i+1}$ from $T_{1}$. The set $S_{i+1}$ can be obtained from
  $S_{i}$ by adding all nodes in $S$ corresponding to edges in
  $M_{i+1}$ and then removing both children (of these newly added
  nodes) from $S_i$. This is equivalent to combining the subtrees
  removed from $S_{i}$ using the edge joining them. This is because
  $M_{i+1}$ is a matching by $P(i)$ and hence one subtree in $S_{i}$
  will never be combined with two other subtrees in $S_{i}$. But then
  contracting subtrees in $S_{i+1}$ from $T_{1}$ is equivalent to
  contracting $S_{i}$ followed by contracting $M_{i+1}$.

  We now show that a matching sequence of length at most $k$ can be
  converted to a strategy tree of depth at most $k+1$. We use proof by
  induction. If the tree $T$ is a single node tree, then the statement
  is trivial. Otherwise, let $e$ be the edge in the last matching
  $M_{k}$ in the sequence and let $(u, v)$ be the corresponding edge
  in $T$. Label the root of $S$ by $e$ and let the left (right)
  subtree of root of $S$ be obtained from the matching sequence
  $M_{1}, \ldots ,M_{k-1}$ restricted to $T_{u}$ ($T\setminus T_{u}$).
  By the inductive hypothesis, these subtrees have height at most
  $k-1$.
\end{proof}

\begin{lemma}
  For any undirected tree $U$, we have \contr{U} = \erank{U}.
  \label{lem:contr=erank}
\end{lemma}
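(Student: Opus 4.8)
The plan is to show $\contr{U} = \erank{U}$ by establishing a tight correspondence between matching sequences and edge rank colorings, proving the two inequalities separately. The central idea is that if we have a matching sequence $M_1, \ldots, M_k$ that contracts $U$ to a single node, the index $i$ of the matching in which an edge gets contracted is exactly the color we should assign to that edge, and conversely a valid edge rank coloring using colors $\{1, \ldots, k\}$ tells us to contract all color-$i$ edges in step $i$. I would first state this correspondence as the natural bijection between ``color of an edge'' and ``step at which the edge is contracted'' and then verify that the defining conditions of each object translate into the other.

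\medskip

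First I would prove $\erank{U} \le \contr{U}$. Given an optimal matching sequence $M_1, \ldots, M_k$, color each edge $e$ with the index $i$ such that $e \in M_i$ (equivalently, $e$ is contracted into a single vertex during step $i$). The claim is that this is a valid edge rank coloring. Consider two edges $e$ and $f$ with the same color $i$; I must produce an edge on the $e$--$f$ path in $U$ with strictly larger color. Since $e$ and $f$ are both contracted at step $i$, they lie in the matching $M_i$ of the tree $T_i$, so as edges of $T_i$ they are disjoint and non-adjacent. The path between them in $T_i$ therefore contains at least one intermediate edge, and that edge of $T_i$ survived to step $i$, so it corresponds to an original edge of $U$ that is contracted at some step $j > i$ (it was not yet contracted when we reached $T_i$). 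Lifting the $T_i$-path back to a path in $U$, every edge on the $U$-path that already got contracted before step $i$ had color $< i$ and collapsed into vertices, while the surviving intermediate edge gives color $j > i$ on the path. This yields the required higher-colored edge, so the coloring is valid with $k$ colors, giving $\erank{U} \le k = \contr{U}$.

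\medskip

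Next I would prove $\contr{U} \le \erank{U}$. Given an optimal edge rank coloring with colors $\{1, \ldots, k\}$, define $M_i$ to be the set of all edges of color $i$, and perform the matching sequence by contracting $M_1$ first, then $M_2$, and so on. The two things to check are that each $M_i$, viewed as a set of edges in the current tree $T_i$, is actually a matching (no two share a vertex), and that after $k$ steps the tree is a single node. For the matching property, suppose two color-$i$ edges became adjacent in $T_i$, i.e. shared a vertex. Adjacency in $T_i$ means that along the path connecting them in $U$ every intermediate edge was already contracted, hence had color $< i$. But the edge rank condition demands some edge of color $> i$ on that path --- a contradiction, since all intermediate edges had strictly smaller color. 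Hence same-colored edges are never adjacent in $T_i$ and $M_i$ is a matching; contracting all $k$ color classes contracts every edge, so $T_{k+1}$ is a single node and $\contr{U} \le k = \erank{U}$.

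\medskip

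I expect the main obstacle to be the bookkeeping in the matching direction: carefully arguing that ``two edges of color $i$ are non-adjacent in the contracted tree $T_i$'' follows from the edge rank condition. The subtlety is that adjacency in $T_i$ corresponds to \emph{all} intermediate edges on the original $U$-path having been already contracted (color $< i$), and one must be sure that the edge rank condition — which guarantees a higher-colored edge strictly \emph{between} any two equal-colored edges — directly forbids this; the argument hinges on correctly identifying the set of edges contracted before step $i$ as exactly those of color less than $i$, which in turn relies on the monotone order in which color classes are contracted. Once this adjacency-versus-path-color dictionary is pinned down, both inequalities follow cleanly and combine to give $\contr{U} = \erank{U}$.
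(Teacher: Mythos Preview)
Your approach is exactly the paper's: color each edge by the index of the matching in which it is contracted, and conversely contract the color classes in increasing order. There is one slip in the first direction. When you locate an intermediate edge on the $e$--$f$ path in $T_i$, you conclude it is contracted at some step $j > i$; but ``survived to step $i$'' only yields $j \ge i$, and nothing prevents that intermediate edge from itself lying in $M_i$. The paper closes this by first reducing, without loss of generality, to a pair $e_1,e_2$ of color $i$ with no other color-$i$ edge between them on the $U$-path (if there were one, replace $e_2$ by it); then every edge strictly between them has color $< i$, so $e_1$ and $e_2$ are adjacent in $T_i$, contradicting that $M_i$ is a matching. Equivalently, pick the violating pair at minimum distance. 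With this small adjustment your argument is complete and coincides with the paper's proof.
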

\begin{proof}
  Consider an optimal matching sequence for $U$. If the edge $e$ is
  contracted in $M_{i}$, then label $e$ with the color $i$. This is an
  edge rank coloring. Suppose for contradiction that there exists two
  edges $e_{1}$ and $e_{2}$ with label $i$ such that there is no edge
  labelled some $j \geq i$ between them. We can assume without loss of generality that there
  is no edge labelled $i$ between $e_{1}$ and $e_{2}$ since if there
  is one such edge, we can let $e_{2}$ to be that edge. Then $e_{1}$
  and $e_{2}$ are adjacent in $T_{i}$ and hence cannot belong to the
  same matching.

  Consider an optimal edge rank coloring for $U$. Then in the
  $i^{\text{th}}$ step all edges labelled $i$ are contracted. This
  forms a matching since in between any two edges labelled $i$, there
  is an edge labelled $j > i$ and hence they are not adjacent in
  $T_{i}$.
\end{proof}

The theorems in this section are summarized in Fig.~\ref{fig:equiv}

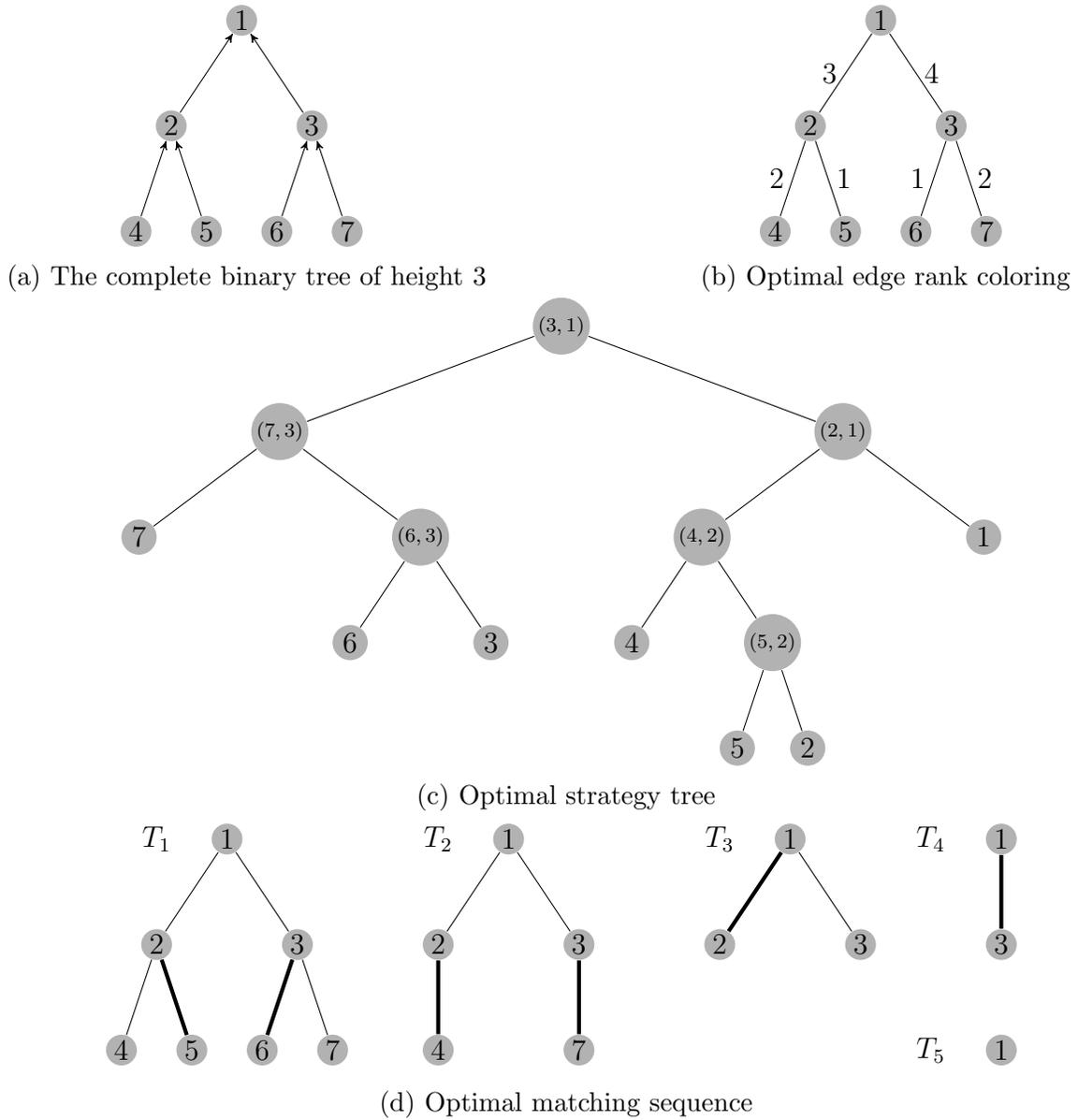
\begin{figure}
  \centering
  \begin{subfigure}{0.45\textwidth}
    \centering
    \begin{tikzpicture}
  \tikzstyle{every node}=[fill=gray!60,circle,inner sep=1pt]
  \tikzstyle{edge from parent}=[draw,<-,>=stealth']
  \tikzstyle{level 1}=[sibling distance=2cm]
  \tikzstyle{level 2}=[sibling distance=1cm]
  \node {1}
  child {node {2} child {node {4}} child {node {5}}}
  child {node {3} child {node {6}} child {node {7}}};
\end{tikzpicture}
    \caption{The complete binary tree of height 3}
    \label{fig:bt3}
  \end{subfigure}\hfill
  \begin{subfigure}{0.45\textwidth}
    \centering
    \begin{tikzpicture}
  \tikzstyle{every node}=[fill=gray!60,circle,inner sep=1pt]
  \tikzstyle{level 1}=[sibling distance=2cm]
  \tikzstyle{level 2}=[sibling distance=1cm]
  \begin{scope}
    \node {1}
    child {node {2}
      child {node {4}  edge from parent node[left,fill=none] {2}}
      child {node {5}  edge from parent node[right,fill=none] {1}}
      edge from parent node[left,fill=none] {3}}
    child {node {3}
      child {node {6}  edge from parent node[left,fill=none] {1}}
      child {node {7}  edge from parent node[right,fill=none] {2}}
      edge from parent node[right,fill=none] {4}};
  \end{scope}
\end{tikzpicture}
    \caption{Optimal edge rank coloring}
    \label{fig:erank3}
  \end{subfigure}

  \begin{subfigure}{0.9\textwidth}
    \centering
    \begin{tikzpicture}
  \tikzstyle{every node}=[fill=gray!60,circle,inner sep=1pt,minimum size=0.5cm]
  \tikzstyle{level 1}=[sibling distance=8cm]
  \tikzstyle{level 2}=[sibling distance=4cm]
  \tikzstyle{level 3}=[sibling distance=2cm]
  \tikzstyle{level 4}=[sibling distance=1cm]
  \tikzstyle{edgesz}=[font=\scriptsize]
  \node[edgesz] {$(3, 1)$}
  child {
    node[edgesz] {$(7, 3)$}
    child { node {$7$}}
    child {
      node[edgesz] {$(6, 3)$}
      child {node {$6$}}
      child {node {$3$}}
    }
  }
  child {
    node[edgesz] {$(2, 1)$}
    child {
      node[edgesz] {$(4, 2)$}
      child {node {$4$}}
      child {
        node[edgesz] {$(5, 2)$}
        child { node {$5$}}
        child { node {$2$}}
      }
    }
    child { node {$1$}}
  };
\end{tikzpicture}
    \caption{Optimal strategy tree}
    \label{fig:st3}
  \end{subfigure}

  \begin{subfigure}{0.9\textwidth}
    \centering
    \begin{tikzpicture}
  \tikzstyle{every node}=[fill=gray!60,circle,inner sep=1pt]
  \tikzstyle{level 1}=[sibling distance=2cm]
  \tikzstyle{level 2}=[sibling distance=1cm]
  \tikzstyle{matched}=[ultra thick]
  \node(11) {1}
  child {node {2} child {node {4}} child {node {5} edge from parent[matched]}}
  child {node {3} child {node {6} edge from parent[matched]} child {node {7}}};

  \node[left of=11,fill=none] {$T_1$};

  \begin{scope}[xshift=4cm]
    \node(21) {1}
    child {node {2} child {node {4} edge from parent[matched]}}
    child {node {3} child {node {7} edge from parent[matched]}};
    \node[left of=21,fill=none] {$T_2$};
  \end{scope}

  \begin{scope}[xshift=8cm]
    \node(31) {1}
    child {node {2} edge from parent[matched]}
    child {node {3}};
    \node[left of=31,fill=none] {$T_3$};
  \end{scope}

  \begin{scope}[xshift=11cm]
    \node(41) {1}
    child {node {3} edge from parent[matched]};
    \node[left of=41,fill=none] {$T_4$};
  \end{scope}

  \begin{scope}[yshift=-3cm,xshift=11cm]
    \node(51) {1};
    \node[left of=51,fill=none] {$T_5$};
  \end{scope}
\end{tikzpicture}
    \caption{Optimal matching sequence}
    \label{fig:matching3}
  \end{subfigure}
  \caption{This figure illustrates the equivalence between persistent
    reversible pebbling, matching game and edge rank coloring on trees
    by showing an optimal strategy tree and the corresponding matching
    sequence and edge rank coloring for height 3 complete binary
    tree.\label{fig:equiv}}
\end{figure}

\begin{theorem}
  Let $T$ be a rooted directed tree and let $U$ be the underlying
  undirected tree for $T$. Then we have $\rev{T} = \erank{U} + 1$.
  \label{thm:main}
\end{theorem}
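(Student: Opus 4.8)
The plan is to prove $\rev{T} = \erank{U} + 1$ by simply chaining together the three lemmas that have been established in this section, since each one bridges one link in the equivalence between pebbling, the matching game, and edge rank coloring.

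\medskip

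\noindent\textbf{Proof plan.} First I would invoke Lemma~\ref{lem:strat=contr}, which states that for the rooted directed tree $T$ and its underlying undirected tree $U$, we have $\rev{T} = k+1$ if and only if $\contr{U} = k$. Setting $k = \contr{U}$, this immediately gives $\rev{T} = \contr{U} + 1$. Next I would apply Lemma~\ref{lem:contr=erank}, which asserts that $\contr{U} = \erank{U}$ for any undirected tree $U$. Substituting this equality into the previous relation yields $\rev{T} = \erank{U} + 1$, which is exactly the claim of Theorem~\ref{thm:main}.

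\medskip

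\noindent I should note that Lemma~\ref{lem:rev=strat} (the equivalence between $\rev{T} \leq k$ and the existence of a strategy tree of depth at most $k$) is implicitly used, since the proof of Lemma~\ref{lem:strat=contr} is phrased in terms of strategy trees rather than directly in terms of $\rev{T}$; the bridge from reversible pebbling number to strategy-tree depth is precisely what Lemma~\ref{lem:rev=strat} supplies. So the full logical chain is
\[
\rev{T} \leq k \iff \text{strategy tree of depth} \leq k \text{ exists} \iff \contr{U} \leq k-1 \iff \erank{U} \leq k-1,
\]
and reading off the exact values gives $\rev{T} = \erank{U} + 1$.

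\medskip

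\noindent\textbf{Main obstacle.} Since all the genuine combinatorial work has already been carried out in the three supporting lemmas, the theorem itself is essentially a one-line composition and presents no real obstacle. The only point requiring minor care is making sure the off-by-one bookkeeping is consistent: Lemma~\ref{lem:strat=contr} relates $\rev{T}$ to $\contr{U}$ with a $+1$ shift, while Lemma~\ref{lem:contr=erank} is an exact equality with no shift, so the single $+1$ must be carried through correctly to land on $\erank{U} + 1$ rather than $\erank{U}$ or $\erank{U} + 2$. I would verify this against the worked example in Figure~\ref{fig:equiv}, where the height-3 complete binary tree has $\erank{U} = 4$ (as shown by the optimal edge rank coloring) and a strategy tree of depth $5$, consistent with $\rev{T} = 4 + 1 = 5$.
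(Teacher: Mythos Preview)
Your proposal is correct and matches the paper's approach exactly: the paper presents Theorem~\ref{thm:main} as an immediate consequence of Lemmas~\ref{lem:rev=strat}, \ref{lem:strat=contr}, and \ref{lem:contr=erank} without giving a separate proof, and your chaining of these lemmas (with the correct off-by-one bookkeeping) is precisely the intended argument.
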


\begin{corollary}
  \vrev{T} and \rev{T} along with strategy trees achieving the optimal
  pebbling value can be computed in polynomial time for trees.
  \label{cor:vis=pers}
\end{corollary}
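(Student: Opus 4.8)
The plan is to treat the two quantities separately: the persistent number follows almost immediately from the chain of equivalences established in this section, while the visiting number needs one extra idea. For \rev{T}, I would first run the linear-time algorithm of Lam and Yue \cite{Lam98optimaledge} to obtain an optimal edge rank coloring of $U$; by Theorem~\ref{thm:main} this already delivers the value $\rev{T}=\erank{U}+1$. To also output an optimal strategy I would exploit that the proofs of Lemma~\ref{lem:contr=erank} and Lemma~\ref{lem:strat=contr} are constructive: reading the coloring as ``contract all color-$i$ edges in step $i$'' produces an optimal matching sequence in linear time, and the induction in Lemma~\ref{lem:strat=contr} turns that matching sequence into a strategy tree of depth $\erank{U}+1$, which by Lemma~\ref{lem:rev=strat} is an optimal persistent pebbling. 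Every step is linear or low-degree polynomial in $|T|$, so \rev{T} together with an optimal strategy tree is computable in polynomial time.

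For \vrev{T} I would start from $\vrev{T}\le\rev{T}\le\vrev{T}+1$, which already forces $\vrev{T}\in\{\rev{T}-1,\rev{T}\}$; since \rev{T} is in hand it only remains to decide whether a single pebble can be saved. I would analyze the root $r$ directly. Any visiting pebbling, at the instant it places $r$, must carry pebbles on all children of $r$ simultaneously, so its cost is controlled by (i) pebbling the child-subtrees $T_{1},\dots,T_{d}$ \emph{persistently} until all their roots are simultaneously pebbled, and (ii) the $d+1$ pebbles present when $r$ is finally added. The saving unavailable in the persistent game is that, $r$ being the sink, its pebble may be removed first and the reclaimed pebble used to reverse phase (i) back to the empty configuration. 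I would aim to prove $\vrev{T}=\max\{\,d+1,\ \max_{j}(\rev{T_{(j)}}+(j-1))\,\}$, where the values $\rev{T_{(j)}}=\erank{U_{j}}+1$ are listed in decreasing order; this is polynomial-time computable, and the corresponding visiting strategy is obtained by concatenating the subtree strategy trees, the insertion and removal of $r$, and the reversals of those strategy trees.

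The persistent direction is essentially bookkeeping layered on the section's lemmas, so I expect the main obstacle to be the lower bound for the visiting number: showing that no visiting pebbling beats the three-phase strategy above, and in particular that making all $d$ child-roots coexist cannot be done more cheaply than the sorted sequential schedule. I would establish this by a potential/cut argument that uses the vertex-disjointness of the subtrees (Property~\ref{stratprop:subtree}): at the moment every child of $r$ is pebbled, the restriction of the configuration to each $T_{j}$ is itself a legal persistent pebbling of $T_{j}$, which forces the claimed maximum. A secondary point to verify is that the reversal phase, carried out with the one reclaimed pebble, never exceeds the bound; this is precisely where the visiting game undercuts \rev{T} by one when the $\max_{j}$ term dominates, and where it cannot (as for $Ch_{2}$, where the $d+1$ term dominates) the two numbers coincide.
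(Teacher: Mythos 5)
Your treatment of \rev{T} is correct and is essentially the paper's own route: Theorem~\ref{thm:main} plus the Lam--Yue linear-time edge ranking algorithm gives the value $\rev{T}=\erank{U}+1$, and the constructive directions of Lemmas~\ref{lem:contr=erank}, \ref{lem:strat=contr} and \ref{lem:rev=strat} convert the coloring into an optimal matching sequence and then into an optimal strategy tree in polynomial time.

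The \vrev{T} half, however, has a genuine gap: the proposed formula $\vrev{T}=\max\{\,d+1,\ \max_{j}(\rev{T_{(j)}}+(j-1))\,\}$ is false, and the failure is exactly at the step you yourself flagged as the main obstacle (the lower bound). Counterexample: the chain $Ch_{6}$. Here $d=1$ and the unique child subtree is $Ch_{5}$, with $\rev{Ch_{5}}=\lceil\log_{2}5\rceil+1=4$ (the edge rank coloring number of a path on $5$ vertices is $3$), so your formula predicts $\vrev{Ch_{6}}=4$; in fact $\vrev{Ch_{6}}=3$. To see this, take the standard $4$-pebble persistent pebbling of $Ch_{7}$ (Proposition~\ref{prop:bt-ch-rbp}, item (3)) and delete vertex $7$ from every configuration: every configuration holding $4$ pebbles has a pebble on vertex $7$, so the restriction is a legal visiting pebbling of $Ch_{6}$ with $3$ pebbles (vertex $6$ is pebbled at some point, and the sequence starts and ends empty); the matching lower bound is $\vrev{Ch_{6}}\geq\rev{Ch_{6}}-1=3$. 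The reason your cut argument breaks is that at the moment the root is pebbled, the configuration restricted to a child subtree $T_{j}$ need not be the terminal configuration of a persistent pebbling of $T_{j}$: it may contain extra ``in transit'' pebbles that are cleaned up only after the root has been visited and its pebble reclaimed. In the example, when vertex $6$ is pebbled the configuration is $\{4,5,6\}$, so $Ch_{5}$ holds pebbles on both $4$ and $5$ and is never pebbled persistently at all; avoiding the sequential-persistent schedule is precisely the saving available in the visiting game, and your formula cannot see it. The paper sidesteps any such structural analysis with a one-line reduction in the opposite direction: $\vrev{T}\leq k \iff \rev{T'}\leq k+1$, where $T'$ is $T$ with a new root $r'$ attached above $r$; hence $\vrev{T}=\erank{U'}$ where $U'$ is the underlying undirected tree with one pendant edge added at the root, again computable by Lam--Yue, and the optimal strategy tree for $T'$ yields the visiting strategy.
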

\begin{proof}
  We show that \revp\ and \vrevp\ are polynomial time equivalent. Let
  $T$ be an instance of \revp. Pick an arbitrary leaf $v$ of $T$ and
  root the tree at $v$. By Theorem~\ref{thm:main}, the reversible
  pebbling number of this tree is the same as that of $T$. Let $T'$ be
  the subtree rooted at the child of $v$. Then we have
  $\rev{T} \leq k \iff \vrev{T'} \leq k-1$.

  Let $T$ be an instance of \vrevp. Let $T'$ be the tree obtained by
  adding the edge $(r, r')$ to $T$ where $r$ is the root of $T$. Then
  we have $\vrev{T} \leq k \iff \rev{T'} \leq k + 1$.

  The statement of the theorem follows from Theorem~\ref{thm:main} and
  the linear-time algorithm for finding an optimal edge rank coloring
  of trees\cite{Lam98optimaledge}.
\end{proof}

The following corollary is immediate from the equivalence of pebble
games (Theorem \ref{lem:chan:dt=rp}).

\begin{corollary}
  For any rooted directed tree $T$, we can compute \DT{T} and \RM{T}
  in polynomial time.
\end{corollary}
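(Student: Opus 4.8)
The plan is simply to chain together two results already established earlier in the paper. First I would observe that a rooted directed tree $T$ is, in particular, a rooted DAG: by the definition in the preliminaries it is acyclic and possesses a root $r$ reachable from every vertex, so the hypotheses of Theorem~\ref{lem:chan:dt=rp} are satisfied by $T$ itself. Applying that theorem to $G = T$ immediately yields the chain of equalities $\DT{T} = \rev{T} = \RM{T}$. This reduces the problem of computing either $\DT{T}$ or $\RM{T}$ to the problem of computing the reversible pebbling number $\rev{T}$.

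The second ingredient is Corollary~\ref{cor:vis=pers}, which asserts that $\rev{T}$ (and the corresponding strategy tree) can be computed in polynomial time for any rooted directed tree, by way of the main theorem and the linear-time edge rank coloring algorithm of \cite{Lam98optimaledge}. Combining this with the equalities above, the algorithm is: compute $\rev{T}$ in polynomial time and report the same value for both $\DT{T}$ and $\RM{T}$. The running time inherits the polynomial bound from Corollary~\ref{cor:vis=pers}.

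Since both steps are direct invocations of previously proved statements, I expect no genuine obstacle here; the corollary is essentially immediate. The only point that warrants a moment's verification is that $T$ legitimately meets the ``rooted DAG'' hypothesis of Theorem~\ref{lem:chan:dt=rp}, which it does by construction. I would therefore present the argument in a single short paragraph rather than an involved proof.
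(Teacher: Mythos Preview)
Your proposal is correct and matches the paper's own justification, which simply states the corollary is immediate from the equivalence of pebble games (Theorem~\ref{lem:chan:dt=rp}) together with the polynomial-time computability of $\rev{T}$ from Corollary~\ref{cor:vis=pers}. Your extra remark verifying that a rooted directed tree satisfies the rooted-DAG hypothesis is fine but unnecessary here.
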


An interesting consequence of Theorem~\ref{thm:main} is that the
persistent reversible pebbling number of a tree depends only on its
underlying undirected graph. A natural question would be to ask
whether this fact generalizes to DAGs. The following proposition shows
that this is not the case.

\begin{proposition}
  There exists two DAGs with the same underlying undirected graph and
  different pebbling numbers.
\end{proposition}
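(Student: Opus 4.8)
The plan is to exhibit, for a suitable non-tree undirected graph, two DAG orientations that share the same underlying undirected graph but have very different reversible pebbling numbers. First I would observe that the underlying undirected graph is forced to contain a cycle: if it were a tree, then any DAG orientation of it with a unique sink is automatically a rooted directed tree (from any vertex, following the unique outgoing edge can never repeat a vertex and so must terminate at the sink), and hence by Theorem~\ref{thm:main} all such orientations have the same reversible pebbling number. The smallest natural non-tree that also admits the asymmetry I want is the fan $F_{k}$ on vertices $\{c, v_{1}, \ldots, v_{k}\}$ with the star edges $\{c, v_{i}\}$ for all $i$ and the path edges $\{v_{i}, v_{i+1}\}$ for $1 \le i < k$.

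I would take two orientations of $F_{k}$. In $G_{1}$ every star edge is directed $v_{i} \to c$ and the path is directed $v_{1} \to v_{2} \to \cdots \to v_{k}$; then $c$ is the unique sink and has in-degree $k$. In $G_{2}$ every star edge is directed $c \to v_{i}$ and the path keeps the same orientation; then $c$ is a source and $v_{k}$ is the unique sink. Both are acyclic, both have a unique sink, and by construction their underlying undirected graph is exactly $F_{k}$.

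The two estimates go in opposite directions. For the lower bound on $\rev{G_{1}}$, note that at the step in which $c$ is first pebbled, the second pebbling rule forces all $k$ in-neighbours of $c$ to be present simultaneously, so that configuration already contains $k+1$ pebbled vertices; hence $\rev{G_{1}} \ge k+1$. For the upper bound on $\rev{G_{2}}$, I would keep the source $c$ pebbled for the entire pebbling. Every precondition that mentions $c$ is then automatically satisfied, so with $c$ pinned the vertices $v_{1}, \ldots, v_{k}$ behave exactly like the directed path $Ch_{k}$; reversibly pebbling that path costs $O(\log k)$ pebbles (by Theorem~\ref{thm:main}, since the edge rank coloring number of a path is logarithmic in its length), and holding $c$ adds at most one more vertex to any configuration. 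Thus $\rev{G_{2}} = O(\log k)$. Choosing $k$ large enough separates the two numbers; already $k = 4$ works, giving $\rev{G_{1}} \ge 5$ while $\rev{G_{2}} \le \rev{Ch_{4}} + 1 = 4$, which proves the proposition.

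The routine parts are the acyclicity and unique-sink checks and the in-degree lower bound for $G_{1}$. The step that needs the most care is the upper bound for $G_{2}$: I must argue that pinning the universal source $c$ costs only a single extra pebble and leaves a genuine reversible pebbling of the residual path, that is, no placement or removal is blocked by a $c$-dependency (true since $c$ is always present) and the cleanup phase that unpebbles the $v_{i}$ likewise only ever requires $c$ together with a path neighbour. Once this reduction to $Ch_{k}$ is justified, the logarithmic bound is immediate and its gap with the linear lower bound for $G_{1}$ completes the argument.
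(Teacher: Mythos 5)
Your proposal is correct, and it is more self-contained than the paper's own proof, though the high-level idea is necessarily the same (exhibit two orientations of one undirected graph). The paper simply displays two fixed 7-vertex DAGs --- a tree plus one extra edge forming a triangle, with that single edge oriented in the two possible ways --- and asserts without further argument that their pebbling numbers are $5$ and $6$. Your fan construction instead comes with proofs of both bounds: the in-degree argument gives $\rev{G_1} \geq k+1$ (any configuration in which the sink $c$ is first pebbled must contain all $k$ in-neighbors plus $c$ itself), and pinning the universal source $c$ reduces $G_2$ to the chain, giving $\rev{G_2} \leq \rev{Ch_k} + 1 \leq \lceil \log_2 k\rceil + 2$ by Proposition~\ref{prop:bt-ch-rbp}; the cleanup is legal because $c$, having no in-neighbors, can be unpebbled at any time. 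This buys an arbitrarily large separation ($k+1$ versus $O(\log k)$) rather than the paper's gap of one, and it makes the proposition independent of any unverified computation. Your preliminary observation that the underlying graph must contain a cycle (via Theorem~\ref{thm:main}) is a nice touch that the paper leaves implicit, though your parenthetical justification is slightly off: a vertex in an oriented tree need not have a \emph{unique} outgoing edge; the correct argument is that in a finite DAG every vertex has a directed path to some sink, so a unique sink is reachable from every vertex. Similarly, the placement precondition is rule~1 of the informal rules (though it is indeed item~2 of the formal definition). Neither slip affects the construction.
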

\begin{proof}
Consider the following two DAGs
  \begin{figure}[ht]
    \begin{subfigure}{0.45\textwidth}
      \centering
      \begin{tikzpicture}[>=stealth', ->, node distance=2cm]
  \tikzstyle{every node}=[fill=gray!60,circle,inner sep=1pt]
  \node (1)  {$1$};
  \node (2) [below left of=1] {$2$};
  \node (4) [below right of=1] {$4$};
  \node (3) [left of=4, right of=2, below of=1] {$3$};
  \node (5) [below left of=4] {$5$};
  \node (6) [below of=4] {$6$};
  \node (7) [below right of=4] {$7$};
  
  \path (2) edge (1)
  (3) edge (1)
  (4) edge (1)
  (4) edge (3)
  (5) edge (4)
  (6) edge (4)
  (7) edge (4);
\end{tikzpicture}
      \caption{$\rev{G_1} = 5$}
    \end{subfigure}\hfill
    \begin{subfigure}{0.45\textwidth}
      \centering
      \begin{tikzpicture}[>=stealth', ->, node distance=2cm]
  \tikzstyle{every node}=[fill=gray!60,circle,inner sep=1pt]
  \node (1)  {$1$};
  \node (2) [below left of=1] {$2$};
  \node (4) [below right of=1] {$4$};
  \node (3) [left of=4, right of=2, below of=1] {$3$};
  \node (5) [below left of=4] {$5$};
  \node (6) [below of=4] {$6$};
  \node (7) [below right of=4] {$7$};
  
  \path (2) edge (1)
  (3) edge (1)
  (4) edge (1)
  (3) edge (4)
  (5) edge (4)
  (6) edge (4)
  (7) edge (4);
\end{tikzpicture}
      \caption{$\rev{G_2} = 6$}
    \end{subfigure}
  \end{figure}
DAGs $G_1$ and $G_2$ have the same underlying undirected graph and different persistent pebbling numbers.
\end{proof}

\section{Time Upper-bound for an Optimal Pebbling of Complete Binary Trees}

In this section, we improve time upper bounds for optimally
pebbling complete binary trees. It is known that the optimal pebbling
number of complete binary trees is $\log(h) + \theta(\log^{*}(h))$,
where $h$ is the height of the tree and $\log^{*}$ is the iterated
logarithmic function(\cite{Kra01}). We give an optimal pebbling of
complete binary trees that takes at most $n^{O(\log\log(n))}$ steps,
where $n$ is the number of nodes in the tree. Our pebbling is
essentially the same as in \cite{Kra01}. Our main contribution is to
show that the pebbling given in \cite{Kra01} is optimal. This proof ,
like the proof of Theorem~\ref{thm:main}, uses the equivalence between
the reversible pebble game and the Dymond-Tompa pebble game.

\begin{proposition}
  The following statements hold.
  \begin{enumerate}
  \item $\rev{Bt_h} \geq \rev{Bt_{h-1}} + 1$
  \item $\rev{Bt_{h}} \geq h + 2$ for $h \geq 3$
  \item (\cite{Ben89}) $\rev{Ch_{n}} \leq \lceil\log_{2}(n)\rceil + 1$ for all $n$
  \end{enumerate}
  \label{prop:bt-ch-rbp}
\end{proposition}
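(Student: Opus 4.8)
The plan is to treat the three parts separately, leaning on the strategy tree description of reversible pebbling (Lemma~\ref{lem:rev=strat}) for the two lower bounds and on a classical recursive pebbling for the upper bound. The single most useful tool is the recursion that Lemma~\ref{lem:rev=strat} yields: since the root of any optimal strategy tree for a tree $T$ with at least two nodes is labelled by an edge $e=(u,v)$ whose two subtrees are optimal strategy trees for $T_u$ and $T\setminus T_u$, reading off depths gives $\rev{T}=1+\min_{e=(u,v)}\max\bigl(\rev{T_u},\rev{T\setminus T_u}\bigr)$, the minimum ranging over all edges of $T$. I will also use that $\rev{\cdot}$ is monotone under taking connected subtrees, which is immediate from Theorem~\ref{thm:main} because an edge rank colouring of $U$ restricts to a valid colouring of any subtree (the path between two equally coloured edges, and the higher edge it must contain, stays inside the subtree).

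For Part~1 I would apply this recursion to $Bt_h$ and show that for every edge $e$ the two sides $D$ (the part below $e$) and $E$ (the part containing the root) satisfy $\max(\rev{D},\rev{E})\ge\rev{Bt_{h-1}}$; summing the contributed $+1$ then gives $\rev{Bt_h}\ge\rev{Bt_{h-1}}+1$. The case analysis is short: if $e$ is incident to the root then $D$ is exactly one of the two principal subtrees, each an isomorphic copy of $Bt_{h-1}$, so $\rev{D}=\rev{Bt_{h-1}}$; otherwise $e$ lies strictly inside one principal subtree, in which case the opposite principal subtree (a full $Bt_{h-1}$) sits entirely inside $E$, and monotonicity gives $\rev{E}\ge\rev{Bt_{h-1}}$. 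This is exactly the statement that, in the Dymond--Tompa game, whichever first edge the pebbler commits to, the challenger can steer the remaining game into an untouched copy of $Bt_{h-1}$.

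For Part~2 I would iterate Part~1, and here lies the step I expect to be the real obstacle: iterating from $\rev{Bt_2}=3$ yields only $\rev{Bt_h}\ge h+1$, one short of the claim, so the extra unit must be wrung out of a stronger base case, namely the \emph{strict} value $\rev{Bt_3}=5$. To prove $\rev{Bt_3}\ge5$, equivalently $\erank{Bt_3}\ge4$ by Theorem~\ref{thm:main}, I would observe that the two non-root internal vertices of $Bt_3$ each have degree three, that the three edges meeting at a degree-three vertex are pairwise adjacent and hence must receive three distinct colours, and that these two claws are edge-disjoint; since the globally largest colour can appear on only one edge, three colours cannot serve both claws, forcing a fourth. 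With this base case, $\rev{Bt_h}\ge\rev{Bt_3}+(h-3)\ge h+2$ for all $h\ge3$.

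Part~3 is the classical bound of Bennett, and I would prove it by the standard midpoint recursion: to put a pebble on the far end of $Ch_n$, recursively pebble the middle vertex, keep that pebble, recursively pebble the far half using the middle as a fresh source, and then clear the middle; the pebble count obeys $P(n)\le 1+P(\lceil n/2\rceil)$, so $P(n)\le\lceil\log_2 n\rceil+1$. Equivalently, via Theorem~\ref{thm:main}, one exhibits an edge rank colouring of the underlying path that places the top colour on a central edge and recurses on the two halves, using $\lceil\log_2 n\rceil$ colours. Beyond the base case of Part~2, everything here is routine.
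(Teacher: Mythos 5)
Your proposal is correct, but it reaches part (1) by a genuinely different route than the paper. The paper argues directly on pebbling sequences: in any persistent pebbling of $Bt_h$, after the root is pebbled each principal subtree must at some later point carry $\vrev{Bt_{h-1}}$ pebbles (the suffix of the pebbling restricted to a subtree, read backwards, is half of a visiting pebbling); at the earliest such moment the root carries a pebble and the \emph{other} subtree still holds at least one pebble, giving $\vrev{Bt_{h-1}}+2 \geq \rev{Bt_{h-1}}+1$ pebbles on the tree. You instead extract the recursion $\rev{T}=1+\min_{e=(u,v)}\max\bigl(\rev{T_u},\rev{T\setminus T_u}\bigr)$ from Lemma~\ref{lem:rev=strat} and combine it with monotonicity of $\rev{\cdot}$ under connected subtrees, which you correctly derive from Theorem~\ref{thm:main} by restricting an edge rank colouring; since both of those results are established before the proposition appears, there is no circularity, and your two-case analysis (edge incident to the root versus edge inside a principal subtree) is sound. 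The trade-off: the paper's argument is self-contained pebbling combinatorics, needs neither the strategy-tree lemma nor the main theorem, and in fact yields the slightly stronger bound $\rev{Bt_h}\geq \vrev{Bt_{h-1}}+2$; yours mechanizes the bound using the structural machinery already available and produces a reusable monotonicity fact. For parts (2) and (3) you are actually more complete than the paper, which asserts $\rev{Bt_3}=5$ without proof and cites Bennett for the chain bound: your claw argument (two edge-disjoint degree-three stars each needing three distinct colours, while the globally largest colour may appear only once) correctly shows $\erank{Bt_3}\geq 4$, and your midpoint recursion $P(n)\leq 1+P(\lceil n/2\rceil)$ correctly gives $\rev{Ch_n}\leq\lceil\log_2 n\rceil+1$.
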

\begin{proof}
  (1) In any persistent pebbling of $Bt_{h}$, consider the earliest
  time after pebbling the root at which one of the subtrees of the
  root node has $\vrev{Bt_{h-1}}$ pebbles. At this time, there is a
  pebble on the root and there is at least one pebble on the other
  subtree of the root node. So, in total, there are at least
  $\vrev{Bt_{h-1}} + 2 \geq \rev{Bt_{h-1}} + 1$ pebbles on the tree.

  (2) Item (1) and the fact that $\rev{Bt_{3}} = 5$.\qed
\end{proof}

\begin{theorem}
  There exists an optimal pebbling of $Bt_{h}$ that takes at most
  $n^{O(\log\log(n))}$ steps.
  \label{thm:bt-time-upperbound}
\end{theorem}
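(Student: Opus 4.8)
The plan is to exhibit an explicit recursive pebbling of $Bt_{h}$, argue that it uses the optimal number $\rev{Bt_{h}}$ of pebbles, and then bound its length by solving a recurrence; as in the proof of Theorem~\ref{thm:main}, the role of the Dymond--Tompa game is to \emph{certify optimality} rather than to build the pebbling. Writing $p(h)=\rev{Bt_{h}}$ and $q(h)=\vrev{Bt_{h}}$, I would track two mutually recursive routines: a persistent routine taking the empty configuration to a single pebble on $root(Bt_{h})$, and a visiting routine that loads $root(Bt_{h})$ momentarily and returns to the empty configuration. The recursion is forced by the shape of the tree: to pebble the root, both children---each the root of a $Bt_{h-1}$---must carry a pebble simultaneously, so the natural move is to persistently pebble one child, hold that single pebble, run the visiting routine on the other child, place the root pebble exactly when the second child is loaded, and then unwind. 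This is essentially Kr\'alovic's pebbling, and the first task is only to record it precisely together with the pebble budget each phase consumes, namely a peak of $\max(q(h-1)+2,\,p(h-1)+1)$.

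For optimality I would pass to the Dymond--Tompa game via Theorem~\ref{lem:chan:dt=rp}, so that it suffices to pin down $\DT{Bt_{h}}$. By Lemma~\ref{lem:chan:ep} I may assume the pebbler always pebbles an effective predecessor of the currently challenged node; starting from the challenged root, the first such move lies strictly inside one child subtree, after which the challenger either keeps the challenge in the root region or chases the pebbler into that subtree. This dichotomy is exactly the persistent/visiting split above, so an inductive analysis of the game should reproduce the pebble count of the construction level by level, with Proposition~\ref{prop:bt-ch-rbp}(1) supplying the base inequality $\rev{Bt_{h}}\ge\rev{Bt_{h-1}}+1$. The delicate accounting here is the $\log^{\ast}$ correction in the optimal value: whether one height level costs one extra pebble or two depends on whether the child's visiting number strictly beats its persistent number, i.e.\ whether $q(h-1)=p(h-1)-1$, and the fussiest part of this step is keeping that case distinction consistent between the game and the construction.

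For the running time I would set up a recurrence of the form $S(h)\le \alpha(h)\,S(h-1)$, where $\alpha(h)$ counts how many height-$(h-1)$ sub-pebblings are invoked while processing level $h$. The naive analysis---no configuration need be repeated, and there are at most $n^{p(h)}=n^{O(\log n)}$ configurations on $n=2^{h}-1$ nodes with $p(h)=O(h)=O(\log n)$ pebbles---already gives the known $n^{O(\log n)}$ bound, which corresponds to a per-level overhead $\alpha(h)=2^{\Theta(h)}$. The improvement I would prove is that, even while insisting on the tight optimal budget, each child subtree need only be re-pebbled $\mathrm{poly}(h)$ times, so that $\alpha(h)=\mathrm{poly}(h)$; then $S(h)\le\prod_{i\le h}\mathrm{poly}(i)=(h!)^{O(1)}=2^{O(h\log h)}$, and since $h=\Theta(\log n)$ this is $n^{O(\log\log n)}$.

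The main obstacle is precisely establishing $\alpha(h)=\mathrm{poly}(h)$, and the reason it is hard is a ``deadlock'' created by the tight budget: once the root is pebbled with both children loaded, a budget of $p(h)=p(h-1)+O(1)$ leaves too few free pebbles to reverse-pebble either child while the other child and the root are held, so a single hold-and-reverse does not close the pebbling. The content of Kr\'alovic's construction is that this can nonetheless be resolved by an incremental, interleaved unloading that re-invokes the visiting routine only polynomially many times in $h$ rather than exponentially often; verifying this $\mathrm{poly}(h)$ count, and checking that the interleaving never violates the optimal budget certified in the second step, is where the real work of the proof lies.
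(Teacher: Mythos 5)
Your proposal has the right skeleton (an explicit recursive pebbling whose optimality is certified through the Dymond--Tompa game, plus a recurrence of the form $S(h)\le \mathrm{poly}(h)\,S(h-1)$ giving $2^{O(h\log h)}=n^{O(\log\log n)}$), but it stops exactly where the theorem's content begins. Your candidate construction --- persistently pebble one child, hold it, run a visiting pebbling of the other child --- is the naive binary recursion, and as you yourself observe it deadlocks under the optimal budget. Your proposed escape is an appeal to an ``incremental, interleaved unloading'' that re-invokes the visiting routine only $\mathrm{poly}(h)$ times, but you give no mechanism for it and explicitly defer its verification; that unproved $\mathrm{poly}(h)$ count \emph{is} the theorem, so the proposal as written contains a genuine gap rather than a proof.

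The paper resolves the deadlock with a different decomposition, which is the missing idea. The pebbler does not split at the root into two height-$(h-1)$ subproblems; instead it walks down the right spine, recursively pebbling $left(right^{i-1}(root(Bt_{h})))$ for $1\le i< h-\log(h)$ and leaving one pebble on each such root. This keeps the budget balanced: after $i$ steps you hold $i$ spine pebbles but the next recursive call is on a tree of height $h-i-1$, so the total stays near $h$ plus lower-order slack; optimality of each move is certified in the DT game via Proposition~\ref{prop:bt-ch-rbp} and the inequality $\rev{Ch_{i}+Bt_{h_{1}-1}}\le\rev{Bt_{h_{1}-1}}+1$, which holds only while $i<h-\log(h)$. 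That constraint forces the second ingredient you are missing: the spine walk must stop at depth $h-\log(h)$, and the residual tree $Ch_{h-\log(h)}+Bt_{\log(h)}$, having only $O(h)$ nodes, is pebbled optimally by \emph{any} strategy, with its length bounded by counting configurations, at most ${(2h-\log(h)+1)}^{3\log(h)}$, since an optimal pebbling never repeats a configuration. This yields the recurrence $t(h)\le 2\left[t(h-1)+\cdots+t(\log(h)+1)\right]+{(2h-\log(h)+1)}^{3\log(h)}\le 2h\,t(h-1)+{(2h)}^{O(\log h)}$, whence $t(h)=2^{O(h\log h)}=n^{O(\log\log n)}$. Without the spine decomposition and the height-$\log(h)$ cutoff with configuration counting, there is no construction in your write-up that simultaneously meets the optimal pebble count and the claimed per-level overhead.
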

\begin{proof}
  We will describe an optimal upstream pebbler in a pebbler-challenger
  game who pebbles $root(Bt_{h})$, $left(root(Bt_{h}))$,
  $left(right(root(Bt_{h})))$ and so on. In general, the pebbler
  pebbles $left(right^{i-1}(root(Bt_{h})))$ in the $i^{\text{th}}$
  step for $1 \leq i < h - \log(h)$. An upper bound on the number of
  steps taken by the reversible pebbling obtained from this game
  (which is, recursively pebble $left(right^{i-1}(root(Bt_{h})))$ for
  $0 \leq i < h - \log(h)$ and optimally pebble the remaining tree
  $Ch_{h - \log(h)} + Bt_{\log(h)}$ using any algorithm) is given
  below. Here the term ${(2h - \log(h) + 1)}^{3\log(h)}$ is an upper
  bound on the number of different pebbling configurations with
  $3\log(h)$ pebbles, and therefore an upper bound for time taken for
  optimally pebbling the tree $Ch_{h - \log(h)} + Bt_{\log(h)}$.
%
  \begin{align*}
    t(h) &\leq 2\left[t(h-1) + t(h-2) + \ldots + t(\log(h) + 1)\right] + {(2h - \log(h) + 1)}^{3\log(h)}\\
         &\leq 2ht(h-1) + {(2h - \log(h) + 1)}^{3\log(h)}\\
         &= O\left({(2h)}^{h}{(2h)}^{3\log(h)}\right)\\
         &= (\log(n))^{O(\log(n))} = n^{O(\log\log(n))}
  \end{align*}

  In the first step, the pebbler will place a pebble on
  $left(root(Bt_{h}))$ and the challenger will re-challenge the root
  node. These moves are optimal. Before the $i^{\text{th}}$ step, the
  tree has pebbles on the root and $left(right^{j}(root(Bt_{h})))$ for
  $0 \leq j < i - 1$. We argue that if $i < h - \log(h)$, placing a
  pebble on $left(right^{i-1}(root(Bt_{h})))$ is an optimal move. If
  the pebbler makes this move, then the cost of the game is
  $\max(\rev{Bt_{h_{1} - 1}}, \rev{Ch_{i} + Bt_{h_{1} - 1}}) =
  \rev{Ch_{i} + Bt_{h_{1} - 1}} \leq \rev{Bt_{h_{1} - 1}} + 1 = p$,
  where $h_{1} = h - i + 1$. Note that the inequality here is true
  when $i < h - \log(h)$ by Prop~\ref{prop:bt-ch-rbp}. We
  consider all other possible pebble placements on $i^{\text{th}}$
  step and prove that all of them are inferior.
\begin{itemize}
  \item \emph{A pebble is placed on the path from the root to
    $right^{i-1}(root(Bt_{h}))$ (inclusive)}: The challenger will
  challenge the node on which this pebble is placed. The cost of this
  game is then at least $\rev{Bt_{h_{1}}} \geq p$.

  \item \emph{A pebble is placed on a node with height less than $h_{1} -
    1$}: The challenger will re-challenge the root node and the cost of
  the game is at least $\rev{Ch_{i} + Bt_{h_{1} - 1}}$.
\end{itemize}
  The theorem follows. For completeness, the following figure represents the optimal pebbler strategy used in the proof of Theorem~\ref{thm:bt-time-upperbound} for proving time upper bounds for complete binary tree.
  \begin{figure}[ht]
    \centering
    \begin{tikzpicture}
  \tikzstyle{every node}=[shape=circle,draw];
  \tikzstyle{pebbled}=[fill=black];
  \matrix [ampersand replacement=\&,column sep=1cm,row sep=1cm, draw=none]
  {
    \& \node[pebbled](root) {};\\
    \node[pebbled](l) {}; \& \& \node(r) {};\\
    \& \node[pebbled](rl) {}; \& \& \node(rr) {};\\
    \& \& \node[pebbled](rrl) {}; \& \& \node(rrr) {};\\
    \& \& \& \node[pebbled](rrrl) {}; \& \& \node(rrrr) {};\\
  };

  \foreach \x/\y in {l/root, r/root, rl/r, rr/r, rrl/rr, rrrl/rrr, rrrr/rrr}
  \draw (\x) -- (\y);

  \draw[dotted] (rrr) -- (rr);
  \draw[dotted] (rrrl) -- (rrl);

  \foreach \x/\y/\z in {l/6cm/1cm, rl/4.65cm/0.8cm, rrl/3.3cm/0.5cm, rrrl/1.95cm/0.3cm, rrrr/1.95cm/0.3cm}
  \draw (\x.south) -- ++(-85:\y) -- ++(180:\z) -- cycle;

  \draw[<->] (root.north) ++(7cm,0) -- node[draw=none,anchor=west]{$i$} ++(-90:5.4cm);
\end{tikzpicture}
    \caption{An Optimal Pebbling for Complete Binary Trees\label{fig:bt}}
  \end{figure}
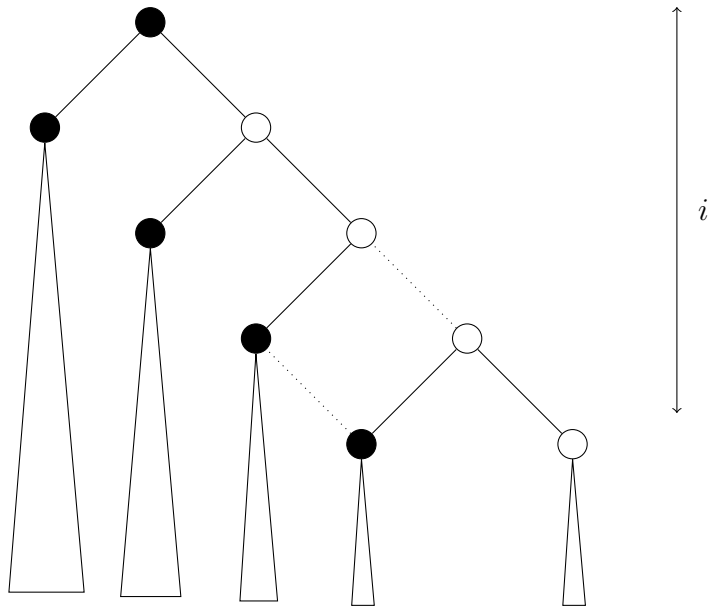  
\end{proof}

\section{Almost Optimal Pebblings of Complete Binary Trees}

In light of Theorem~\ref{thm:bt-time-upperbound}, the natural question
to ask is whether there are polynomial time optimal pebblings for
complete binary trees. In this section, we show that we can get
arbitrarily close to optimal pebblings for complete binary trees using
a polynomial number of steps (Theorem~\ref{thm:complete-trees-poly}).

\begin{theorem}
  For any constant $\epsilon > 0$, we can pebble $Bt_{h}$ using at
  most $(1 + \epsilon)h$ pebbles and $n^{O(\log(1/\epsilon))}$ steps
  for sufficiently large $h$.
  \label{thm:complete-trees-poly}
\end{theorem}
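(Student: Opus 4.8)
The plan is to trade the $\Theta(\epsilon h)$ spare pebbles (note that $(1+\epsilon)h$ comfortably exceeds the lower bound $\rev{Bt_h}\ge h+2$ of Proposition~\ref{prop:bt-ch-rbp}) for a polynomial running time, by a divide-and-conquer recursion on the height. The anchor is a base case where slack is abundant: if we may use $2m$ pebbles on $Bt_m$, then the naive symmetric strategy — persistently pebble the subtree of $left(root(Bt_m))$, hold that pebble, persistently pebble the subtree of $right(root(Bt_m))$, pebble $root(Bt_m)$, and then reverse both subtree pebblings — uses $\rev{Bt_{m-1}}+2\le 2m$ pebbles and obeys $t(m)\le 4\,t(m-1)+O(1)$, so it runs in $O(4^m)=O(n^2)$ steps. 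Hence \emph{any} subtree whose available budget is at least twice its height can be finished in polynomial time, and the whole task reduces to driving each pendant subtree into this regime cheaply.

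For the recursive step I would cut $Bt_h$ at some height $d$, viewing it as a top block $Bt_d$ whose $2^{d-1}$ leaves are the roots of pendant copies of $Bt_{h-d+1}$. I pebble $root(Bt_h)$ by running a pebbling of the small block $Bt_d$ — computed by exhaustive search over its configurations, which is affordable since $Bt_d$ has $2^{O(d)}$ nodes and $\rev{Bt_d}=O(d)$ — and servicing each request to place or remove a pebble on a leaf of the block by recursively (persistently) pebbling or reversing the corresponding pendant $Bt_{h-d+1}$, exactly as in the upstream strategy of Theorem~\ref{thm:bt-time-upperbound}. The budget is shared: while a pendant is active we also hold the block's current configuration, at most $\rev{Bt_d}$ pebbles, so the pendant recursion runs with budget $(1+\epsilon)h-(\rev{Bt_d}-1)$. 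The invariant to track is the additive slack $s=(\text{budget})-(\text{height})$: a level lowers the height by $d-1$ and the budget by $\rev{Bt_d}-1$, so $s$ falls by only $\rev{Bt_d}-d=\Theta(\log^*d)$ per level, while the relative slack $s/(\text{height})$ grows. We recurse until the pendant height drops below the surviving slack, then invoke the base case.

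Counting cost, one level multiplies the running time by the number of leaf-operations inside the block, at most its total pebbling time $2^{O(d^2)}$. The delicate point — and where I expect the real work — is to make the final exponent $O(\log(1/\epsilon))$ rather than $O(1/\epsilon)$. With a single fixed block size the two constraints fight: keeping the accumulated slack loss $\Theta((h/d)\log^* d)$ below $\epsilon h$ forces $d=\Omega(1/\epsilon)$, already costing $2^{O(dh)}=n^{\Omega(1/\epsilon)}$, yet a large block cannot be searched in polynomial time. I would resolve this by replacing the additive, fixed-block recursion with a \emph{geometric} one that halves the height each level: split $Bt_h$ into a top $Bt_{h/2}$ and pendant $Bt_{h/2}$'s and recurse on \emph{both} with the same almost-optimal procedure, allocating budget $(1+\epsilon)(h/2)$ to the block and $(1+\epsilon)(h/2)$ to the active pendant. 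This preserves the relative slack $\epsilon$ across a level, and only $O(\log(1/\epsilon))$ levels separate height $h$ from the height $\epsilon h$ at which relative slack reaches one and the base case fires.

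The main obstacle is then the joint accounting in this geometric recursion. First one must verify that the $(1+\epsilon)(h/2)+(1+\epsilon)(h/2)$ split keeps the true peak within $(1+\epsilon)h$, i.e. that the block holds few pebbles precisely at the moments a pendant is active. Harder, one must bound the number of times the block's split-level leaves are re-pebbled, since that is the factor by which a level multiplies the time; only if this count is $\mathrm{poly}(n)$ does each of the $O(\log(1/\epsilon))$ levels contribute a polynomial factor, giving the claimed $n^{O(\log(1/\epsilon))}$. This leaf-revisiting count is exactly the quantity governed by Kr\'alovic's time--space trade-off for reversible pebbling of chains \cite{Kra01}, and adapting that analysis to the branching structure here is the crux of the argument. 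As a sanity check, for constant $\epsilon$ the bound degrades gracefully to $n^{O(1)}$, consistent with the polynomial pebbling obtained directly from the base case with $2h$ pebbles.
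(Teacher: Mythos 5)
Your proposal has a genuine gap: the two steps you yourself flag as ``delicate'' are the entire content of the theorem, and they are never carried out. Your first, fixed-block route is (as you correctly compute) dead on arrival, since the forced choice $d=\Omega(1/\epsilon)$ yields $n^{\Omega(1/\epsilon)}$ steps. The fallback geometric recursion is the right general shape but is internally inconsistent as stated. If you split the budget evenly, giving $(1+\epsilon)(h/2)$ to the block and $(1+\epsilon)(h/2)$ to the active pendant, then it is the \emph{relative} slack that is preserved and the \emph{absolute} slack that halves each level; consequently ``relative slack reaches one at height $\epsilon h$'' never happens and your base case never fires. The recursion can only bottom out at the constant height $h_{0}$ where $(1+\epsilon)h_{0}$ first approaches $\rev{Bt_{h_{0}}}=h_{0}+\Theta(\log^{*}h_{0})$, i.e.\ $h_{0}=\Theta(1/\epsilon)$, and then one must track the additive per-level overhead, $S(h)\le 2S(h/2)+O(1)$, hence $S(h)\le (h/h_{0})\bigl(\rev{Bt_{h_{0}}}+O(1)\bigr)$, and choose $h_{0}$ large enough that this stays below $(1+\epsilon)h$ --- an accounting you never do. Alternatively, if you preserve the absolute slack so that the base case does fire at height $\epsilon h$, the two budgets sum to $(1+2\epsilon)h$, and your first ``delicate point'' (that the block is nearly empty whenever a pendant is active) is precisely the unproven claim. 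Either way the space bound is not established. The time bound fares no better: you reduce it to counting leaf re-pebblings and propose adapting Kr\'alovic's chain analysis, but no recurrence or bound is ever produced; in fact no such analysis is needed, because in the recursive implementation the number of block moves is at most the block's own pebbling time, each leaf move costs at most $4T(h/2)+1$ steps, so $T(h)\le O\bigl(T(h/2)^{2}\bigr)$, which unrolls to $T(h)\le \bigl(C\,T(h_{0})\bigr)^{h/h_{0}}=n^{O(\log(1/\epsilon))}$ once the constant-height base case is in place. So your plan is completable, but what is missing is the proof.

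For contrast, the paper's argument needs none of this machinery. Fix an integer $k>1/\epsilon$; recursively (persistently) pebble the $k+1$ subtrees rooted at $left(right^{i}(root(Bt_{h})))$ for $0\le i\le k-1$ and at $right^{k}(root(Bt_{h}))$; with those $k+1$ pebbles in place, climb to the root using $k$ extra pebbles in $2k-1$ steps; then reverse the recursive pebblings, leaving only the root pebbled. This gives the two one-line recurrences
\begin{align*}
S(h) &\le S(h-k)+(k+1)\le \tfrac{k+1}{k}\,h,\\
T(h) &\le 2\Bigl[\sum_{i=1}^{k}T(h-i)\Bigr]+(2k+2)\le (2k)^{h}(2k+2)=n^{O(\log k)},
\end{align*}
and the theorem follows by taking $k>1/\epsilon$. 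Had you completed the space and time accounting above, your halving decomposition would constitute a genuinely different (and substantially more involved) proof; as submitted, it is an outline with the crux deferred.
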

\begin{proof}
  Let $k \geq 1$ be an integer. Then consider the following pebbling
  strategy parameterized by $k$. 
  \begin{enumerate}
  \item Recursively pebble the subtrees rooted at
    $left(right^{i}(root(Bt_{h})))$ for $0 \leq i \leq k-1$ and
    $right^{k}(root(Bt_{h}))$.
  \item Leaving the $(k+1)$ pebbles on the tree (from the previous
    step), pebble the root node using an additional $k$ pebbles in $2k-1$
    steps.
  \item Retaining the pebble on the root, reverse step (1) to remove
    every other pebble from the tree.
  \end{enumerate}

  The number of pebbles and the number of steps used by the above
  strategy on $Bt_{h}$ for sufficiently large $h$ is given by the
  following recurrences.

 \[    S(h) \leq S(h-k) + (k + 1) \leq \frac{(k+1)}{k}h \]
\[     T(h)  \leq 2\left[\sum_{i = 1}^{k} T(h-i)\right] + (2k + 2) 
\leq {(2k)}^{h}(2k+2) \leq  n^{\log(k) + 1}(2k+2) \]
  where $n$ is the number of nodes in $Bt_{h}$.

  If we choose $k > 1/\epsilon$, then the theorem follows.
\end{proof}

\section{Time-space Trade-offs for Bounded-degree Trees}
In \cite{Kra01}, it is shown that there are linear time pebbling
sequences for paths that use only $n^{\epsilon}$ pebbles for any
constant $\epsilon > 0$. In this section, we generalize this result to
bounded degree trees (Theorem~\ref{thm:tree-lin-time}).

\begin{theorem}
  For any constant positive integer $k$, a bounded-degree tree $T$
  consisting of $n$ vertices can be pebbled using at most
  $O\left(n^{1/k}\right)$ pebbles and $O(n)$ pebbling moves.
  \label{thm:tree-lin-time}
\end{theorem}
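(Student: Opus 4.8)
The plan is to generalize Kr\'alovic's path construction by a recursive separator decomposition, exploiting the fact that the budget $O(n^{1/k})$ is large enough to keep a checkpoint on \emph{every} piece of a decomposition simultaneously. First I would recall the standard fact that a tree of maximum degree $d$ on $s$ vertices admits, for any target $t$, a set of $O(s/t)$ edges whose removal leaves components each of size at most $t$. This is exactly where bounded degree is used: for a star no such small separator exists. I will apply this with a split factor that is \emph{fixed} across all levels of the recursion: a piece of size $s$ at recursion depth $\ell$ is cut into $n^{1/k}$ subpieces of size $s/n^{1/k}$. Since the original tree has size $n$, the piece sizes are $n,\ n^{(k-1)/k},\ n^{(k-2)/k},\dots$, so the recursion has \emph{exactly} $k$ levels, bottoming out at singletons. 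This is the key to getting linear time rather than the polylogarithmic overhead that a ``halving'' separator would produce.

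Next I would describe how to pebble one piece $Q$ given its decomposition into subpieces $C_1,\dots,C_m$ with $m=O(n^{1/k})$. Contracting each subpiece yields a skeleton tree on $m$ nodes; each $C_j$ has one \emph{exit} (its root, the endpoint of the separator edge heading toward the global root) and some \emph{entry} nodes where separator edges of child subpieces attach. The strategy is: in a topological order of the skeleton (children before parents) recursively pebble each $C_j$, leaving a single pebble on its exit and treating the already-placed exits of its children as pre-pebbled sources. Because $m=O(n^{1/k})$ fits in the budget, I never remove these exit pebbles during this \emph{build} phase, so after it finishes the global root carries a pebble. I then run a \emph{cleanup} phase removing every exit except the root's, in top-down skeleton order, each removal being the reverse of the corresponding recursive pebbling (valid because a node's children-exits are still present when it is unpebbled). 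Thus each subpiece is pebbled exactly twice, once forward and once in reverse. The recursive object is ``a bounded-degree tree with a designated set of pre-pebbled sources,'' and since pre-pebbling sources only makes pebbling easier, the same time/space bounds apply to the subcalls.

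For the analysis I would set up two recurrences. Writing $T_\ell$ for the time and noting that build and cleanup each touch every subpiece once,
\[
  T_\ell \;\le\; 2\,n^{1/k}\,T_{\ell+1} + O(n^{1/k}), \qquad T_k = O(1),
\]
which unrolls to $T_0 \le (2 n^{1/k})^k\,O(1) + \sum_{\ell<k}(2n^{1/k})^\ell\,O(n^{1/k}) = O(2^k n) = O(n)$ since $k$ is constant. For space, each recursive call cleans up its own internal checkpoints before returning, so at every one of the $k$ levels the held pebbles are the $\le n^{1/k}$ exits of that level plus the pebbles of the single active deeper call; hence the space is at most $O(k\,n^{1/k}) = O(n^{1/k})$.

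The hard part, and the point I would be most careful about, is the linear time bound: a naive reversible pebbling of a tree re-pebbles subtrees exponentially often, so the whole argument hinges on (i) keeping all $O(n^{1/k})$ checkpoints live so each piece is built and reversed only once, and (ii) using a globally fixed split factor $n^{1/k}$ so the recursion terminates in exactly $k$ levels, making the $2^k$ blow-up a constant. I would also verify the routine bookkeeping that each induced subpiece is again a bounded-degree tree and that the top-down cleanup order keeps every needed source pebbled. The $\epsilon$-form $O(n^\epsilon)$ stated in the abstract then follows by taking $k > 1/\epsilon$.
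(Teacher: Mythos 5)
Your proposal is correct and is essentially the paper's own argument: the paper likewise uses the bounded-degree tree-separator fact to split the tree into $\Theta(n^{1/k})$ connected pieces of size $\Theta(n^{(k-1)/k})$, keeps a checkpoint pebble on each piece's root while pebbling the pieces in bottom-up order via recursion (the paper phrases this as induction on $k$, which yields exactly your fixed split factor $n^{1/k}$ and $k$-level recursion), and then reverses the moves to clean up, obtaining the same $O(k\,n^{1/k})$ space and $O(2^k n)$ time bounds. The only differences are presentational (your per-level top-down cleanup versus the paper's global reversal of the move sequence, and your explicit treatment of child exits as pre-pebbled sources), not substantive.
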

\begin{proof}
  Let us prove this by induction on the value of $k$. In the base case
  ($k=1$), we are allowed to use $O(n)$ pebbles. So, the best strategy
  would to place a pebble on every vertex of $T$ in bottom-up fashion,
  starting from the leaf nodes. After the root is pebbled, we unpebble
  each node in exactly the reverse order, while leaving the root
  pebbled.

  In this strategy, clearly, each node is pebbled and unpebbled at
  most once. Hence the number of pebbling moves must be bounded by
  $2n$. Hence, a tree can be pebbled using $O(n)$ pebbles in $O(n)$ moves.

  Now consider that for $k \leq k_0 - 1$, where $k_0$ is an integer $\geq 2$, any bounded-degree
  tree $T$ with $n$ vertices can be pebbled using $O\left(n^{1/k}\right)$ pebbles in $O(n)$ moves.
  Assume that we are allowed $O\left(n^{1/k_0}\right)$ pebbles. To apply induction, we will be decomposing the tree into smaller components. We prove the following claim first. 
\begin{claim}
Let $T'$ be any bounded-degree tree with $n' > n^{(k_0-1)/k_0}$ vertices and maximum degree $\Delta$. There
exists a subtree $T''$ of $T'$ such that the number of vertices in
$T''$ is at least $\lfloor n^{(k_0-1)/k_0}/2 \rfloor$ and at most
$\lceil n^{(k_0-1)/k_0} \rceil$.
\label{lem:partition}
\end{claim}

\begin{proof}
From the classical tree-separator theorem, we know that $T'$ can be divided into two subtrees, where the larger subtree has between $\lfloor n'/2 \rfloor$ and $\lceil n' \cdot \dfrac{\Delta}{\Delta + 1} \rceil$ vertices. The key is to recursively subdivide the tree in this way and continually choose the larger subtree. However, we need to show that in doing this we will definitely strike upon a subtree with the number of vertices within the required range. Let $T_1',T_2', \ldots$ be the sequence of subtrees we obtain in these iterations. Also let $v_i$ be the number of vertices in $T_i'$ for every $i$. Note that $\forall i, \lfloor v_i/2 \rfloor \leq v_{i+1} \leq \lceil v_i \cdot \dfrac{\Delta}{\Delta+1} \rceil$. Assume that $j$ is the last iteration where $v_j > \lceil n^{(k_0-1)/k_0} \rceil$. Clearly $v_{j+1} \geq \lfloor n^{(k_0-1)/k_0}/2 \rfloor$. Also, by the definition of $j$, $v_{j+1} \leq \lceil n^{(k_0-1)/k_0} \rceil$. Hence the proof.
\end{proof}
  
The final strategy will be as follows:
\begin{enumerate}
\item Separate the tree into $\theta(n^{1/k_0})$ connected subtrees, each containing $\theta(n^{(k_0-1)/k_0})$ vertices. Claim ~\ref{lem:partition} indicates that this can always be done.

\item Let us number these subtrees in the following inductive fashion: denote by $T_1$, the `lowermost' subtree, i.e. every path to the root of $T_1$ must originate from a leaf of $T$. Denote by $T_i$, the subtree for which every path to the root originates from either a leaf of $T$ or the root of some $T_j$ for $j<i$. Also, let $n_i$ denote the number of vertices in $T_i$.
%

\item Pebble $T_1$ using $O\left(n_1^{1/(k_0-1)}\right) = O\left(n^{1/k_0}\right)$ pebbles. From the induction hypothesis, we know that this can be done using $O(n_1)$ pebbling moves.

\item Retaining the pebble on the root node of $T_1$, proceed to pebble $T_2$ in the same way as above. Continue this procedure till the root node of $T$ is pebbled. Then proceed to unpebble every other vertex by executing every pebble move up to this instant in reverse order.
\end{enumerate}
Now we argue the bounds on the number of pebbles and pebbling moves of the algorithm. Recall that the number of these subtrees is $O\left(n^{1/k_0}\right)$. Therefore, the number of intermediate pebbles at the root nodes of these subtrees is $O\left(n^{1/k_0}\right)$. Additionally, while pebbling the last subtree, $O\left(n^{1/k_0}\right)$ pebbles are used. Therefore, the total number of pebbles at any time remains $O\left(n^{1/k_0}\right)$. Each of the subtrees are pebbled and unpebbled once (effectively pebbled twice). Therefore the total number of pebbling moves is at most $\sum_i 2O(n_i)=O(n)$.
\end{proof}
%
\section{Discussion \& Open Problems}

We studied reversible pebbling on trees. Although there are polynomial
time algorithms for computing black and black-white pebbling numbers
for trees, it was unclear, prior to our work, whether the reversible
pebbling number for trees could be computed in polynomial time. We also
established that almost optimal pebbling can be done in polynomial
time.

We conclude with the following open problems. 

\begin{itemize}
\item Prove or disprove that there is an optimal pebbling for
  complete binary trees that takes at most $O\left(n^k\right)$ steps for a fixed
  $k$.
\item Prove or disprove that the there is a constant $k$ such that
  optimal pebbling for any tree takes at most $O\left(n^k\right)$ (for black and
  black-white pebble games, this statement is true with $k = 1$).
\item Give a polynomial time algorithm for computing optimal pebblings
  of trees that take the smallest number of steps.
\end{itemize}

\bibliographystyle{plain}
\bibliography{RevPebTrees.bib}

\end{document}